%% file: arxiv_main.tex
\documentclass[format=acmsmall, review=false]{acmart}
\usepackage{arxiv_version}
\usepackage{booktabs} 
\usepackage[ruled]{algorithm2e} 

\SetAlFnt{\small}
\SetAlCapFnt{\small}
\SetAlCapNameFnt{\small}
\SetAlCapHSkip{0pt}
\IncMargin{-\parindent}
\acmYear{2026}

\setcitestyle{authoryear}

\title[Fair Data-Exchange Mechanisms]{Fair Data-Exchange Mechanisms}

\author{Rashida Hakim}
\author{Christos Papadimitriou}
\author{Mihalis Yannakakis}

\authorsaddresses{%
  Authors' contact information: rashida.hakim@columbia.edu, christos@columbia.edu, mihalis@cs.columbia.edu \\
  Affiliation: Columbia University, New York, NY, USA}

\renewcommand{\shortauthors}{Hakim, Papadimitriou, and Yannakakis}

\begin{abstract}
We study data exchange among strategic agents without monetary transfers, motivated by domains such as research consortia and healthcare collaborations where payments are infeasible or restricted. The central challenge is to reap the benefits of data-sharing while preventing free-riding that would otherwise lead agents to under invest in data collection. We introduce a simple fair-exchange contract in which, for every pair of agents, each agent receives exactly as many data points as it provides, equal to the minimum of their two collection levels. We show that the game induced by this contract is supermodular under a transformation of the strategy space. This results in a clean structure: pure Nash equilibria exist, they form a lattice, and can be computed in time quadratic in the number of agents. In addition, the maximal equilibrium is truthfully implementable under natural enforcement assumptions and is globally Pareto-optimal across all strategy profiles. In a graph-restricted variant of the model supermodularity fails, but an adaptation of the construction still yields efficiently computable pure Nash equilibria and Pareto-optimal outcomes. Overall, fair exchange provides a tractable and incentive-aligned mechanism for data exchange in the absence of payments.
\end{abstract}

\keywords{data exchange, mechanism design, supermodular games, federated learning}

\begin{document}

\begin{titlepage}

\maketitle


\end{titlepage}

\section{Introduction}
In many domains, valuable data cannot simply be bought and sold. In healthcare, hospitals and research consortia may face legal restrictions on selling patient-level records. In academia, ethical norms and funding rules often prohibit researchers from treating data as a commercial asset. Yet the value of collaboration is clear: larger and more diverse datasets lead to better models, more reliable scientific conclusions, better-trained systems, and greater societal benefit. 

Collecting and cleaning data is often costly, as it requires time, computational resources, and domain expertise. Without monetary payments, data sharing can lead to freeloading: organizations may avoid the upfront costs of collecting and cleaning data, hoping to reap the benefits of others’ work. In this manner, collaborative efforts may even reduce total data collection and social welfare.   

When each party privately knows its own costs and expected benefits, the challenge is to design contracts that coordinate efficient sharing and align incentives for honest participation. Naïve contracts can fail dramatically. Full sharing encourages freeloading, where one party contributes little but benefits from others’ efforts, while overly restrictive sharing can leave mutual gains unrealized. 

We propose a simple and principled alternative: a {\em fair data-exchange mechanism} that enables each pair to share data one-for-one to the maximum extent possible, capped by the smaller contribution. In the resulting continuous\footnote{In view of the large amounts of data involved in the intended applications of this work, it makes sense to assume that data amounts are continuous; however, we also discuss the discrete case.} 
game, a transformation of strategy space turns out to be advantageous: instead of considering the amount of data each player collects, we calculate the total amount that they receive after the exchange is complete. 

We show that this game is supermodular with positive spillovers \cite{levin2003supermodular}, ensuring the existence of a lattice of pure Nash equilibria. We characterize the maximal equilibrium of this game, and we prove that the maximal equilibrium is Pareto-optimal over all outcomes. For completeness, we also characterize the minimal equilibrium. In addition, a mechanism based on the maximal equilibrium of fair-data-exchange is truthful, in that agents have no incentives to misrepresent their private information about costs or their benefit from data. Finally, this analysis can be extended to discrete and network-constrained settings. The fair-exchange mechanism thus offers a tractable foundation for equitable, incentive-compatible data sharing when monetary transfers are problematic.

\subsection{Overview of Results}
Our main contributions are these:
\begin{itemize}
    \item \textbf{Fair-exchange mechanism (Section~\ref{sec:fair-exchange}).} We formalize a simple contract for data-for-data exchange, where each pair of agents shares the minimum of their collected data quantities. This contract ensures reciprocal sharing without transfers and serves as the foundation for our analysis. 

    \item\textbf{Supermodularity (Section~\ref{sec:supermodular-fair-exchange}).} We show that when data amounts are continuous, the resulting game is supermodular with positive spillovers. This structure guarantees the existence of a lattice of pure Nash equilibria and that there exists a Pareto-best NE. 
    
    \item\textbf{Constructive algorithm. (Section~\ref{subsec:max-eq}).} We give a quadratic-time algorithm that computes the maximal equilibrium of the fair-exchange game. This coincides with the Pareto-best NE promised by supermodularity.
    
    \item\textbf{Truthfulness and Pareto Optimality (Section~\ref{sec:truthfulness}).}
    We show that a mechanism which elicits agents’ self-reported data costs and benefit functions and outputs this equilibrium is truthful under natural enforcement models. We also prove that the maximal equilibrium is Pareto-optimal across all strategy profiles.  
    
    \item\textbf{Graph and Discrete Extensions (Sections~\ref{sec:graph} and ~\ref{sec:discrete}).}
    We extend our analysis to graph-restricted and discrete settings. In both cases, supermodularity is lost and equilibria may become incomparable, but modified algorithms still lead to truthful mechanisms and Pareto-undominated outcomes.
\end{itemize}

\subsection{Related Work}
\label{sec:related}

Our work connects most directly to data-exchange economies, including \citet{bhaskara2024data}, \citet{akrami2025theoretical} and \cite{song2025existence}, which study how agents voluntarily exchange data for mutual benefit under fairness (reciprocity) and stability requirements. A key distinction is that these models abstract away the costs of acquiring data, so any restriction on sharing to meet fairness goals is imposed normatively. In contrast, we explicitly model costly data collection, so reciprocal exchange is necessary in order to create good welfare and incentivize honest reporting in the mechanism design problem. 

A complementary line of work studies incentives in federated learning. Recent papers such as \citet{murhekar2024you,yi2025incentive_fl, karimireddy2022mechanisms} and the related models in \citet{chen2023mechanism, clinton2024data} analyze effort, incentives, and welfare when agents jointly train a shared global model. These settings face similar incentive issues as ours, such as free-riding that diminishes social welfare and leads to reduced data contribution from everyone. Federated learning is a public-learning paradigm whereas data exchange is a private-learning paradigm, where each agent has its own task that it cares about. This leads to fundamentally different strategic structures and methods for how to incentivize effort (typically through payments or distributing a corrupted data set). Due to the difference in our model, we are able to rely on the reciprocity of the exchange alone to align incentives for the agents.

We also contribute to the literature on mechanism design for data acquisition with strategic agents. Prior work studies payments for strategic data sources \citep{cai2015opt, ananthakrishnan2024delegating, gong2018incentivizing} and incentive compatible privacy-aware data contribution \citep{cummings2023optimal, fallah2024optimal}. Similar to our setting, agents experience costs for contributing more (or better) data to others. However, these models typically operate in the principal-agent setting, where the person benefiting from the data is different from the person paying the cost to acquire (or share, in the privacy literature) the data. The data exchange setting models the case in which all agents are participating on both sides of this contract. 

Technically, our continuous model draws on the theory of supermodular games and strategic complementarities \citep{topkis1979equilibrium, milgromroberts90}, though the particular structure we induce yields additional properties—global Pareto-optimality and truthful implementation—that do not follow from supermodularity alone. Another paper \cite{alaei2025incentivizing} also finds a connection between a data sharing platform with costly effort and supermodularity, although their method hinges on a payment scheme. This suggests that inducing supermodularity could be a useful design principle when creating incentive-compatible data-exchange rules. 

\section{Model}
\label{sec:model}
There are $n$ agents. Each agent $i$ can privately \emph{collect} data samples from a common underlying distribution, paying a constant per-sample cost $c_i>0$.
We model collection as \emph{continuous}: agent $i$ chooses an effort level $x_i\ge 0$, interpreted as the amount of data collected. We assume a sufficiently large sample pool so that agents' samples do not overlap (or overlap with negligible probability).

Once data are collected, they can be \emph{shared} digitally at zero marginal cost with any number of recipients. Therefore the key friction is not transfer, but the incentive to incur collection costs in the first place. 

\paragraph{Benefits from data.}
Agent $i$'s value from having access to $t_i$ total samples---its own plus any it receives from others---is given by a benefit function $b_i:\mathbb{R}_{\ge 0}\to\mathbb{R}$.
We assume $b_i$ is continuous, nondecreasing, and concave, capturing diminishing marginal returns from repeatedly sampling the same distribution. We further assume the marginal value of additional data vanishes at large scale, formalized as $\lim_{t\to\infty} b_i'(t)=0$. Note that throughout the paper we will interpret $b_i'(t)$ as the right derivative when $b_i$ is not differentiable. 

\subsection{Outcomes and payoffs}
Fix a data-collection profile $X=(x_1,\dots,x_n)$.
A data-sharing contract specifies, for each ordered pair $(i,j)$, how many of $j$'s samples become accessible to $i$ as a function of their collections; we write this as
\[
D_{ij}(x_i,x_j)\;\ge\;0.
\]
Given $X$, agent $i$'s total accessible data is
\[
t_i(X)\;=\;x_i \;+\; \sum_{j\neq i} D_{ij}(x_i,x_j),
\]
and her utility is quasi-linear in collection costs:
\[
U_i(X)\;=\; b_i\!\big(t_i(X)\big)\;-\;c_i x_i.
\]

\paragraph{Why not always fully share?}
Because $b_i$ is nondecreasing and replication is free, \emph{conditional on a fixed realized dataset}, it is socially optimal to make all collected samples accessible to everyone.
However, unconditional sharing lets agents enjoy others' data without paying collection costs, creating a free-riding problem at the \emph{collection} stage. Examples of catastrophic freeloading of this type are given by \cite{chen2023mechanism} and \cite{karimireddy2022mechanisms}. Agreeing in advance to exchange contracts aims to recover some of the efficiency of broad access while maintaining incentives to contribute data.



\subsection{Fair Exchange Contract}
\label{sec:fair-exchange}
We now introduce the central idea studied in this paper: the \emph{fair-exchange contract}, which we view as an interpolation between no-exchange and full-exchange which induces desirable properties. Under this contract, each pair of agents exchanges as much data as both have collected, capturing the idea of reciprocal contribution. Formally, for any pair of agents $i$ and $j$, the contract specifies that the number of samples transferred from $j$ to $i$ is
\[
D_{ij}(x_i, x_j) = \min\{x_i,\, x_j\}.
\]

Under fair exchange, each agent’s total accessible data is
\[
t_i(x)
= x_i + \sum_{j \neq i} \min\{x_i, x_j\}.
\]

Because data transfer is costless, an agent’s strategic decision reduces to choosing $x_i$, the amount of data it personally collects. The fair-exchange rule determines how this interacts with others’ choices.

\subsubsection{Two Perspectives on the Fair-Exchange Game}
In \emph{data-collection space}, the fair-exchange contract defines a normal-form game in which each agent $i$ chooses a collection level $x_i$.  Given a profile $X=(x_1,\dots,x_n)$, the contract deterministically maps collections to an exchange outcome and hence to payoffs $u_i(X)$ (value from accessed data minus collection cost).

In our analysis, it is often convenient to work in \emph{total-access coordinates}.  Given $X$, let $T=\Phi(X)=(t_1,\dots,t_n)$ denote the induced vector of total accessible data levels.  Appendix~\ref{fwdinv} provides an inverse map $\Phi^{-1}$ on the feasible image of $\Phi$, so each $T$ corresponds to a unique collection profile $X=\Phi^{-1}(T)$ and hence to the same exchange outcome and payoffs. Agents' strategic choices remain the collection levels $x_i$: the map $X \leftrightarrow T=\Phi(X)$ is a profile-level reparameterization used for analysis.

\subsubsection{Agent Ranking}
\label{def:rankings}
We track agents' incentives using only their index in the sorted data collection profile $X$. For a profile $X=(x_1,\dots,x_n)$, we define two useful quantities that describe agents' exact positions, since a simple index isn't sufficient to account for ties. 
\[
k_i(X)\;:=\;\bigl|\{\,j:\; x_j \ge x_i\,\}\bigr|\in\{1,\dots,n\}
\qquad\text{(\# agents weakly above, including self),}
\]
\[
k_i^{\uparrow}(X)\;:=\;1+\bigl|\{\,j:\; x_j > x_i\,\}\bigr|\in\{1,\dots,n\}
\qquad\text{(\# agents strictly above, plus self).}
\]

The weakly-above count governs incentives to decrease contributions, as an agent loses out on that much data per unit that they marginally decrease due to fair-exchange. The strict-above count likewise governs incentives to \emph{increase} contributions, since agents can gain that much for every additional marginal unit they collect. 

We frequently work in total-data space $T=\Phi(X)$. Importantly, $\Phi$ preserves the ranking of agents: agents who collect more get the same amount from smaller agents and more from bigger agents. Therefore, if $x_i<x_j$ then $t_i<t_j$ (and due to the symmetry of fair-exchange, ties in $x$ correspond to ties in $t$). Thus the quantities $k_i(\cdot)$ and $k_i^{\uparrow}(\cdot)$ can be equivalently defined using $T$ instead of $X$.

Next, we will show the game is supermodular when analyzed from the total-data perspective, implying the existence of a lattice of pure Nash equilibria. 

\section{Supermodularity}
\label{sec:supermodularity}

\subsection{Definitions}

\begin{definition}[Supermodularity]
A game is supermodular (equivalently has increasing differences) if raising other agents’ actions makes agent $i$’s marginal gain from raising her own action weakly larger \cite{topkis1979equilibrium}. 

Formally, if for every agent $i$, for all $X_{-i} \le X'_{-i}$ and all $x_i \le x'_i$,
\[
U_i(x'_i,\, X_{-i}) - U_i(x_i,\, X_{-i})
\;\le\;
U_i(x'_i,\, X'_{-i}) - U_i(x_i,\, X'_{-i}) .
\]

A key implication is that best responses must be \emph{increasing}: if other agents choose larger strategies, $i$'s best response cannot fall. 
\end{definition}

\begin{definition}[Positive spillovers]
A game has \emph{positive spillovers} if increasing one agent's strategy weakly increases all others' payoffs. 
Equivalently, for any distinct agents $i\neq j$, for all $x_j \le y_j$, and for all fixed $x_{-j}\in X_{-j}$,
\[
U_i(x_j,\, x_{-j}) \;\le\; U_i(y_j,\, x_{-j}) .
\]
\end{definition}

\subsection{Application to Fair Exchange}
\label{sec:supermodular-fair-exchange}
\paragraph{Supermodularity fails in data-collection space.}
The fair-exchange game violates this property when strategies are expressed in $(x_i)$. For example, consider two agents where agent~1 wants to end with a total of $10$ data points. If agent~2 collects $5$, agent~1 best responds with $5$; if agent~2 collects $0$, agent~1 best responds with $10$. Thus agent~1's best response \emph{decreases} when agent~2 collects more, contradicting increasing best responses. 

\paragraph{Supermodularity in total-data space.} This failure of supermodularity turns out to be product of the data-collection lens. We can observe in our example above that no agent receives less total data in the second case, even though agent 1's best response decreases. This observation leads to a more general theorem.

\begin{theorem}[Supermodularity in total-access coordinates]\label{thm:supermod}
Under the partial order on collection profiles defined by
$X \preceq X' \iff \Phi(X) \le \Phi(X')$ componentwise,
the fair-exchange game is supermodular.
\end{theorem}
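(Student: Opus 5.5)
The plan is to write agent $i$'s payoff directly as a function of the total-access profile $T=\Phi(X)$, and then read off increasing differences from an explicit formula. The key step is a closed form for $\Phi^{-1}$ consistent with Appendix~\ref{fwdinv}. For a profile $T$ and a level $s\ge 0$, define the counting function
\[
N_T(s) := \bigl|\{\, j : t_j \ge s \,\}\bigr|.
\]
I claim that
\[
x_i \;=\; \int_0^{t_i} \frac{ds}{N_T(s)} .
\]
Note that $N_T(s)\ge 1$ on $[0,t_i]$, since $i$ itself is counted.

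To justify the claim, sort the agents so that $t_{(1)}\le\dots\le t_{(n)}$. Because $\Phi$ preserves rankings, this is also the order of the $x$'s. The fair-exchange rule gives
\[
t_{(k)} \;=\; \sum_{j<k} x_{(j)} + (n-k+1)\,x_{(k)} .
\]
Subtracting consecutive equations yields
\[
t_{(k)}-t_{(k-1)} \;=\; (n-k+1)\bigl(x_{(k)}-x_{(k-1)}\bigr).
\]
On the interval $(t_{(k-1)},t_{(k)}]$ we have exactly $N_T(s)=n-k+1$, so summing these increments gives the integral formula. Ties are handled automatically: tied agents produce zero-length intervals, and the weak inequality in $N_T$ counts all tied agents.

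Conversely, the formula is defined for every $T\in\mathbb{R}^n_{\ge 0}$. It produces $x$ with the same ordering as $T$, and reversing the telescoping shows $\Phi(x(T))=T$. Hence $\Phi$ is a bijection onto $\mathbb{R}^n_{\ge0}$. This matters because it makes the order $\preceq$ a genuine product lattice, in which each agent's strategy is effectively $t_i\ge 0$ and can be varied independently.

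With this in hand, write
\[
U_i(T)=b_i(t_i)-c_i\int_0^{t_i}\frac{ds}{N_T(s)}.
\]
Fix $t_i\le t_i'$ and compare $T=(t_i,T_{-i})$ with $(t_i',T_{-i})$. On $[0,t_i]$ both profiles have the same $N$, because agent $i$ is counted for every $s\le t_i$ in both. On $(t_i,t_i']$, with $i$ at $t_i'$, the count is $1+|\{j\neq i: t_j\ge s\}|$. The difference is therefore
\[
U_i(t_i',T_{-i})-U_i(t_i,T_{-i}) \;=\; b_i(t_i')-b_i(t_i)\;-\;c_i\int_{t_i}^{t_i'}\frac{ds}{1+|\{j\ne i: t_j\ge s\}|}.
\]
The benefit part does not depend on $T_{-i}$. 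If $T_{-i}\le T'_{-i}$ componentwise, then $|\{j\neq i: t_j\ge s\}|$ is weakly larger pointwise in $s$, so the integrand is weakly smaller and the difference is weakly larger. This is exactly increasing differences in the order $\preceq$.

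The step I expect to need the most care is the first one: verifying the integral form of $\Phi^{-1}$, including ties, and confirming that it agrees with the appendix's inverse map. In particular, I must check that varying $t_i$ while holding $T_{-i}$ fixed corresponds to a legitimate profile. That is, the deviation is interpreted as a move in $T$-space, and the other agents' collections $x_j$ may shift to keep their $t_j$ fixed. This interpretation is what the partial order $\preceq$ in the statement presupposes. Once the formula is established, the monotonicity argument is immediate.
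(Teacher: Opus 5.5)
Your proposal is correct and follows essentially the same route as the paper. Both arguments show that $x_i(T)$ has decreasing differences because its slope in $t_i$ is the reciprocal of the number of agents weakly above level $t_i$, a count that can only grow when $T_{-i}$ rises. Your formula $x_i=\int_0^{t_i}ds/N_T(s)$ is the integrated form of the paper's piecewise-affine Steps 1--3.

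Your extra check that $\Phi$ is a bijection of $\mathbb{R}^n_{\ge 0}$ onto itself fills in a point the paper leaves implicit, since the paper only defines $\Phi^{-1}$ on the image of $\Phi$. That check guarantees that each $t_i$ can legitimately be varied independently.
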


\begin{proof}
We will abuse notation by writing
$x_i(T):=\Phi^{-1}(T)_i$, where $\Phi^{-1}$ is the inverse transform from total-data space to data collection space (defined in Appendix~\ref{fwdinv}).

Fix an agent $i$. Since $U_i(T)=b_i(T_i)-c_i\,x_i(T)$ and $b_i(T_i)$ depends only on $T_i$,
it suffices to show that $x_i(T)$ has \emph{decreasing differences} in $(T_i,T_{-i})$:
for all $t_i\le t_i'$ and all $T_{-i}\le T'_{-i}$,
\[
x_i(t_i',T'_{-i})-x_i(t_i,T'_{-i})
\;\le\;
x_i(t_i',T_{-i})-x_i(t_i,T_{-i}).
\tag{DD}\label{eq:dd}
\]

\textbf{Step 1: Affine structure of $x_i$ as $T_i$ varies.}
Fix $T_{-i}$ and increase $t_i$ continuously. As long as $t_i$ stays between two adjacent coordinates of $T_{-i}$,
the set of agents with total at least $t_i$ is unchanged; equivalently, the quantity $k_i(t_i,T_{-i})$ is constant on that interval.

On such an interval, the behavior of agents strictly below $i$ is pinned down by their own totals, so the amount of data
they contribute upward is constant as $t_i$ varies; let $P$ denote this constant ``baseline'' contribution to $t_i$ coming
from agents below $i$. Any additional total for $i$ beyond $P$ must be generated by $i$'s own collection.
Under fair exchange, each marginal unit collected by $i$ is delivered to exactly the agents weakly above $i$,
i.e., to $k_i(t_i,T_{-i})$ recipients. Therefore, within this interval, increasing $t_i$ by one unit requires
increasing $x_i$ by exactly $1/k_i(t_i,T_{-i})$. Formally, we can derive this from the structure of $\Phi^{-1}$ given in Appendix~\ref{fwdinv}.
\[
x_i(t_i,T_{-i})=\frac{t_i-P}{k_i(t_i,T_{-i})}
\quad\text{on this interval,}
\]
so $x_i(\cdot,T_{-i})$ is affine there with slope $1/k_i(t_i,T_{-i})$.

\textbf{Step 2: Increasing $T_{-i}$ makes the slope smaller.}
Now take $T'_{-i}\ge T_{-i}$ and fix $t_i$. If other agents' totals increase, then (weakly) more agents satisfy $T_j\ge t_i$,
so
\[
k_i(t_i,T'_{-i}) \;\ge\; k_i(t_i,T_{-i}).
\]
Since the local slope equals $1/k_i(\cdot)$, it follows that wherever $k_i$ is locally constant,
\[
\frac{\partial}{\partial t_i}\,x_i(t_i,T'_{-i})
\;\le\;
\frac{\partial}{\partial t_i}\,x_i(t_i,T_{-i}).
\]

\textbf{Step 3: Decreasing differences.}
Integrating this slope inequality over $t_i\in[t_i,t_i']$ (noting that $x_i(\cdot,T_{-i})$ and $x_i(\cdot,T'_{-i})$
are piecewise affine with finitely many points where $k_i(\cdot)$ changes) yields
\[
x_i(t_i',T'_{-i})-x_i(t_i,T'_{-i})
\;\le\;
x_i(t_i',T_{-i})-x_i(t_i,T_{-i}),
\]
which is \eqref{eq:dd}. Multiplying by $-c_i$ gives increasing differences for $-c_i x_i(T)$, and therefore
$U_i$ has increasing differences in $(T_i,T_{-i})$.
\end{proof}

\paragraph{Positive spillovers in total-data space}While it is immediate that the game has positive spillovers in data-collection space, it requires proof in total-data space due to the effects of what other agents report on how much an agent is asked to collect.

\begin{theorem}
When strategies are expressed as total data $T=(t_1,\dots,t_n)$, the fair-exchange game has positive spillovers.
\end{theorem}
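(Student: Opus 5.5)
The plan is to reduce the claim to a monotonicity statement about the inverse map $\Phi^{-1}$. Fix agents $i\neq j$ and a profile $T$, and raise only $t_j$ to $t_j'\ge t_j$. Agent $i$'s utility is $U_i(T)=b_i(t_i)-c_i\,x_i(T)$, where $x_i(T):=\Phi^{-1}(T)_i$ as in the proof of Theorem~\ref{thm:supermod}. The benefit term $b_i(t_i)$ does not change, and $c_i>0$. So it suffices to show that $x_i(T)$ is weakly decreasing in every coordinate $t_j$ with $j\neq i$. In words: the more total data others end up with, the less $i$ must collect to reach a given total.

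\textbf{Step 1: Describe $\Phi^{-1}$ and check the domain.} Sort the totals as $t_{(1)}\le\dots\le t_{(n)}$ and set $t_{(0)}=0$. Since $\Phi$ preserves rankings, the collections are sorted the same way, and $t_{(m)}=\sum_{l<m}x_{(l)}+(n-m+1)\,x_{(m)}$. Subtracting consecutive equations gives
\[
t_{(m)}-t_{(m-1)}=(n-m+1)\bigl(x_{(m)}-x_{(m-1)}\bigr).
\]
Hence every nonnegative vector $T$ is feasible: the recovered $x$'s are automatically nonnegative and correctly ordered. This matters because raising $t_j$ alone must not leave the domain, including when $t_j$ passes other agents' totals. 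I would confirm this against Appendix~\ref{fwdinv}.

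\textbf{Step 2: Write $x_i$ as an integral.} Telescoping the recursion gives
\[
x_i(T)=\sum_{m\le \mathrm{rank}(i)}\frac{t_{(m)}-t_{(m-1)}}{n-m+1}.
\]
On the interval $(t_{(m-1)},t_{(m)}]$, the number of agents with total at least $s$ is exactly $n-m+1$. Therefore
\[
x_i(T)=\int_0^{t_i}\frac{ds}{N_T(s)},\qquad N_T(s):=\bigl|\{\,\ell:\ t_\ell\ge s\,\}\bigr|.
\]
For $s\le t_i$ we have $N_T(s)\ge1$, because agent $i$ itself counts. Ties are handled automatically, since tied agents contribute zero-length intervals. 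This is the integrated form of the slope $1/k_i$ from Step~1 of the proof of Theorem~\ref{thm:supermod}.

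\textbf{Step 3: Conclude.} Raising $t_j$ to $t_j'$ keeps the upper limit $t_i$ fixed. It also weakly enlarges each set $\{\ell: t_\ell\ge s\}$, so $N_{T'}(s)\ge N_T(s)$ pointwise. The integrand therefore weakly decreases, so $x_i(T')\le x_i(T)$ and $U_i(T')\ge U_i(T)$. Applying this one coordinate at a time gives positive spillovers.

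The main obstacle is Step~1–2: justifying the closed form and feasibility rigorously when $t_j$ crosses other agents' totals, since the ranking of agents then changes. The integral representation is designed to absorb this, because it depends on $T$ only through the counting function $N_T$, with no reference to sorted positions.
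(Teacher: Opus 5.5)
Your argument is correct, but it takes a different route from the paper's. The paper argues locally: on a region of $T_{-i}$ where $k_i(t_i,T_{-i})$ is constant, it writes $x_i=(t_i-P)/k_i$, with $P$ the baseline contribution from agents below $i$. It then asserts that raising $T_{-i}$ can only increase $P$ and $k_i$. You instead telescope the inverse recursion into the global formula $x_i(T)=\int_0^{t_i}ds/N_T(s)$. This depends on $T$ only through the counting function $N_T$, so monotonicity in $T_{-i}$ reduces to a pointwise comparison of integrands.

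Your route is more rigorous exactly where the paper's sketch is loose. When an agent crosses from below $i$ to weakly above it, $P$ actually drops: with $n=2$ and $0<t_j<t_i$, raising $t_j$ above $t_i$ takes $P$ from $t_j/2$ to $0$. So the local argument needs a gluing step across region boundaries, which the paper leaves implicit. Your formula avoids this because it never refers to sorted positions. You also verify that every $T\in\mathbb{R}^n_{\ge 0}$ lies in the image of $\Phi$, so a unilateral increase of $t_j$ stays feasible; the paper takes this for granted.

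The same representation also gives the decreasing differences of Theorem~\ref{thm:supermod} directly. For $s\le t_i$ one has $N_T(s)=1+|\{\ell\neq i: t_\ell\ge s\}|$, which does not depend on $t_i$ and is nondecreasing in $T_{-i}$. In exchange, the paper's version is shorter and conveys the economic picture: a baseline from below, plus one unit delivered per recipient weakly above $i$.
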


\begin{proof}
Fix an agent $i$. We must show that for all $T_{-i}\le T'_{-i}$ and all fixed $t_i$,
\[
U_i(t_i,T'_{-i}) \;\ge\; U_i(t_i,T_{-i}).
\tag{PS}\label{eq:ps}
\]
Since $U_i(T)=b_i(t_i)-c_i\,x_i(T)$ and $b_i(t_i)$ does not depend on $T_{-i}$, it suffices to prove that
for fixed $t_i$,
\[
x_i(t_i,T'_{-i}) \;\le\; x_i(t_i,T_{-i}).
\tag{PS$'$}\label{eq:psprime}
\]

Fix $t_i$. On any region of $T_{-i}$ where the set of agents with total at least $t_i$ is unchanged  (so $k_i(t_i,T_{-i})$ is constant), the inverse formula for $\Phi^{-1}$ implies that
\[
x_i(t_i,T_{-i})=\frac{t_i-P}{k_i(t_i,T_{-i})},
\]
where $P$ is the baseline contribution to $t_i$ coming from agents below $i$. Now increase $T_{-i}$ to $T'_{-i}\ge T_{-i}$.
Then the baseline contribution from below can only increase ($P'\ge P$), and the set of agents with total at least $t_i$
can only grow (so $k_i(t_i,T'_{-i})\ge k_i(t_i,T_{-i})$). Both effects weakly decrease
$\frac{t_i-P}{k_i}$, hence $x_i(t_i,T'_{-i})\le x_i(t_i,T_{-i})$, proving \eqref{eq:psprime} and therefore \eqref{eq:ps}.
\end{proof}

\subsection{Consequences for Equilibria}

We rely on the following consequences of supermodularity (all in total-data space):

\begin{enumerate}
    \item \textbf{Pure Nash equilibrium exists} \cite{topkis1979equilibrium}.  
    \item \textbf{Equilibria form a lattice}: there is a well-defined smallest and largest equilibrium under componentwise order \cite{topkis1979equilibrium}.  
    \item \textbf{With positive spillovers, the largest equilibrium is Pareto-best}: meaning \emph{every} agent
    weakly prefers it to every other equilibrium \cite{vives05}. Note that this is a stronger property than Pareto-optimality. 
\end{enumerate}

These properties apply to the maximal equilibrium that we give a constructive algorithm for in Section~\ref{sec:equilibria}. 

\section{Equilibria}
\label{sec:equilibria}
We begin our analysis of equilibria of the game with an important definition that captures an agent's marginal tradeoff at a given rank parameter. 

\begin{definition}[$K$-data level]
\label{def:kdatalevel}
For each agent $i$ and integer $K \ge 0$, the \emph{$K$-data level}, denoted $s_i^K$, is defined by
\[
s_i^K \;:=\; \max\Bigl\{\,s \ge 0 \;:\; b_i'(s)\geq\tfrac{c_i}{K}\Bigr\}.
\]
\end{definition}

The $K$-data-level $s_i^K$ is the amount of total accessible data at which agent $i$ no longer wants to collect a marginal unit when it would receive $K$ \emph{additional} units for free through fair exchange. This is exactly the situation an agent faces when $K$ agents are strictly above them. Because $b_i$ is concave, the sequence $(s_i^K)_{K=0}^{n-1}$ is nondecreasing.

The $K$-data-levels are the main ingredient to understanding the local equilibrium conditions for each agent. They will also prove critical for computing equilibria of the fair-exchange game. 

\subsection{Local equilibrium conditions}
\label{sec:local-cond}
In any pure Nash equilibrium $X$, every agent must satisfy two local incentive constraints. These conditions are clearly necessary but not immediately sufficient to prove that a particular profile is a PNE. However, we will later argue that these two conditions are actually sufficient due to the structure of the fair-exchange game. 
\begin{itemize}
    \item \textbf{No profitable upward deviation.}
    If agent $i$ has $k_i^{\uparrow}(X)$ agents collecting strictly more than it does, then
    \[
    t_i(X^\ast) \ge s_i^{k_i^{\uparrow}(X)}.
    \]
    \item \textbf{No profitable downward deviation.}
    If agent $i$ has $k_i(X)$ agents collecting at least as much as it does, then
    \[
    t_i(X^\ast) \le s_i^{k_i(X)}.
    \]
\end{itemize}

\subsection{Deviation Lemmas}
In order to prove that the local conditions suffice and our later results about truthfulness and global Pareto optimality, we introduce two key deviation lemmas that capture the fundamental structure of the fair-exchange game. 
\begin{lemma}[Upward move is harmful when lower agents are fixed]\label{lem:upward-harm-fixed-lower-plusone}
Fix an agent $i$ and two collection profiles $X$ and $X'$ with $x_i'>x_i$. Assume that for every agent $j\neq i$ with $x_j\le x_i$ we have $x_j'=x_j$ (i.e., all agents other than $i$ who collect weakly less than $x$ in $X$ keep the same collection in $X'$). Additionally assume agent $i$ strictly does not want to locally deviate upwards in $X$, meaning that if they collected marginally more data, holding others fixed, their utility would strictly decrease. These conditions imply agent $i$ is strictly worse off in $X'$ than in $X$. 
\end{lemma}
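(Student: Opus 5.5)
The idea is to bound agent $i$'s utility in $X'$ by a one-dimensional concave function of $i$'s own collection. That function coincides with $i$'s utility along small upward deviations from $X$, so concavity turns the local hypothesis into a global one.

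\textbf{Step 1: set up the bound.} Partition the agents $j\neq i$ by their collection in $X$. Let $L:=\sum_{j\neq i:\,x_j\le x_i} x_j$, summing over the agents weakly below $i$, ties included. Let $m:=|\{j: x_j>x_i\}|$, so that $k_i^{\uparrow}(X)=m+1$. In $X$, each lower agent contributes $\min\{x_i,x_j\}=x_j$ and each upper agent contributes $x_i$, so
\[
t_i(X)=x_i+L+m\,x_i .
\]
In $X'$, the lower agents keep $x_j'=x_j\le x_i<x_i'$ by hypothesis, so they still contribute exactly $x_j$. Each of the $m$ upper agents contributes $\min\{x_i',x_j'\}\le x_i'$, however it has changed. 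Hence
\[
t_i(X')\le L+(1+m)\,x_i' .
\]
Define $g(y):=b_i\bigl(L+(1+m)y\bigr)-c_i y$ for $y\ge x_i$. Since $b_i$ is nondecreasing, $U_i(X')\le g(x_i')$, and by the display above $U_i(X)=g(x_i)$. It therefore suffices to show $g(x_i')<g(x_i)$.

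\textbf{Step 2: identify $g$ with the local deviation.} Consider $i$ raising its collection to $x_i+\varepsilon$ with the others fixed, where $\varepsilon$ is smaller than the gap to the smallest $x_j>x_i$. The lower agents still contribute $x_j$ and the upper agents now contribute $x_i+\varepsilon$. So $i$'s utility is exactly $g(x_i+\varepsilon)$. The hypothesis that agent $i$ strictly does not want to deviate upward then says $g(x_i+\varepsilon)<g(x_i)$ for all sufficiently small $\varepsilon>0$. (Equivalently, the right derivative $(1+m)b_i'(t_i(X))-c_i$ is negative or $g$ is strictly decreasing just to the right of $x_i$.)

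\textbf{Step 3: globalize by concavity.} $g$ is concave, being a concave nondecreasing $b_i$ composed with an affine map, minus a linear term. Suppose $g(y)\ge g(x_i)$ for some $y>x_i$. Then concavity gives $g(z)\ge g(x_i)$ for every $z\in[x_i,y]$, which contradicts Step 2 for small $z-x_i$. Hence $g(y)<g(x_i)$ for all $y>x_i$, and in particular
\[
U_i(X')\le g(x_i')<g(x_i)=U_i(X).
\]

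The main point needing care is Step 2: the local-deviation utility must match $g$ exactly. This depends on tied agents being counted among the fixed lower agents, since they contribute $x_j$ whether $i$ moves up or not. It also depends on taking $\varepsilon$ below the gap to the next strictly higher agent, so that exactly $m+1$ units of total data accrue per unit collected. Once that is settled, the rest is the elementary concavity argument.
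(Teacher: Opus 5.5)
Your proof is correct and follows the paper's argument. Both use the same bound $t_i(X')\le t_i(X)+k_i^{\uparrow}(X)\,(x_i'-x_i)$: the fixed lower agents (ties included) contribute exactly as before, and each upper agent contributes at most $x_i'$. Both then finish with concavity. The only difference is how the local hypothesis is turned into a global one. The paper uses the supporting-line inequality together with the strict derivative condition $k_i^{\uparrow}(X)\,b_i'(t_i)<c_i$. You instead apply a chord argument to the concave function $g$, using the literal strict-local-decrease hypothesis. Your version is marginally more general, since it also covers a zero right derivative with strict decrease. It also makes explicit why $g$ coincides with $i$'s utility under small upward deviations.
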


\begin{proof}
Because every agent $j\neq i$ with $x_j\le x_i$ keeps the same collection in $X'$, their total contribution to $i$'s accessible data is identical across $X$ and $X'$. Thus any increase in $i$'s total accessible data when moving from $X$ to $X'$ can only come from raising $i$'s own collection from $x_i$ to $x'_i$ and from the agents who are above $i$.

In the best case for agent $i$ in $X'$, the agents strictly above $x_i$ in $X$ are all above $x'_i$ in $X'$ (due to positive spillovers). In this case, $i$'s total accessible data will increase by $k_i^{\uparrow}(X)(x'_i - x_i)$ where $k_i^{\uparrow}$ is exactly the number of agents strictly above $i$, plus $i$ themselves (as defined in Section~\ref{def:rankings}). This is because under fair-exchange, the additional amount an agent will contribute to $i$ for increasing their data collection is exactly the amount of the increase, and raising $i$'s level cannot create new agents above $i$.
\[
t_i(X') \;\le\; t_i + \bigl(k_i^{\uparrow}(X)\bigr)(x_i'-x_i).
\]

Hence $i$'s benefit gain is at most
\[
b_i\!\left(t_i+\bigl(k_i^{\uparrow}(X)\bigr)(x_i'-x_i)\right)-b_i(t_i),
\]
while its cost increases by exactly $c_i(x_i'-x_i)$. By concavity of $b_i$,
\[
b_i\!\left(t_i+\bigl(k_i^{\uparrow}(X)\bigr)(x_i'-x_i)\right)-b_i(t_i)
\;\le\;
b_i'(t_i)\cdot \bigl(k_i^{\uparrow}(X)\bigr)(x_i'-x_i).
\]

Combining,
\[
\bigl(b_i(t_i(X'))-c_i x_i'\bigr)-\bigl(b_i(t_i)-c_i x_i\bigr)
\;\le\;
\Bigl(\bigl(k_i^{\uparrow}(X)\bigr)b_i'(t_i)-c_i\Bigr)(x_i'-x_i)
\;<\; 0,
\]
where the strict inequality uses $\bigl(k_i^{\uparrow}(X)\bigr)b_i'(t_i)<c_i$ and $x_i'-x_i>0$.
Thus $i$ is strictly worse off in $X'$ than in $X$.
\end{proof}

\begin{lemma}[Downward move is weakly harmful when low agents are fixed]\label{lem:downward-harm-robust}
Fix an agent $i$ and two collection profiles $X$ and $X'$ with $x_i'<x_i$.
Assume that for every $j\neq i$ with $x_j < x'_i$ we have $x_j(X')=x_j(X)$ (i.e., all agents other than $i$ who collect strictly less than $x_i'$ in $X$ keep the same collection in $X'$). Additionally assume agent $i$ weakly does not want to locally deviate downwards in $X$, meaning that if they collected marginally less data, holding others fixed, their utility would weakly decrease. These conditions imply agent $i$ is weakly worse off in $X'$ than in $X$. 
\end{lemma}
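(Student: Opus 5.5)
We mirror the proof of Lemma~\ref{lem:upward-harm-fixed-lower-plusone}, now bounding how much accessible data $i$ can lose instead of how much it can gain. The plan is to show that $i$'s total cannot fall by more than $k_i(X)(x_i-x_i')$. After that, concavity turns the local no-downward-deviation condition into a global inequality.

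\textbf{Step 1: a lower bound on $t_i(X')$.} We decompose $t_i(X') = x_i' + \sum_{j\neq i}\min\{x_i',x_j'\}$ and sort the agents $j\neq i$ by their level in $X$.
\begin{itemize}
\item If $x_j<x_i'$, then $x_j'=x_j$ by hypothesis. Its contribution $\min\{x_i',x_j\}=x_j=\min\{x_i,x_j\}$ is unchanged.
\item If $x_i'\le x_j<x_i$, then $x_j$ may change. We only use $\min\{x_i',x_j'\}\ge 0$ against the old contribution $x_j\le x_i$. This is the delicate case, discussed below.
\item If $x_j\ge x_i$, the agent is weakly above $i$ in $X$. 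Its contribution drops from $x_i$ to $\min\{x_i',x_j'\}$.
\end{itemize}
Here we want a worst-case bound in which each agent weakly above $i$ costs $i$ at most $x_i-x_i'$. Intuitively, $i$'s own collection now caps what it receives at $x_i'$. For agents still above $x_i'$ in $X'$ the loss is exactly $x_i-x_i'$. Agents with $x_j\ge x_i$ that fall below $x_i'$ in $X'$ would make the loss larger, so this case needs care.

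\textbf{Step 2: the main obstacle.} The main obstacle is the agents $j$ not pinned down by the hypothesis, namely those with $x_j\ge x_i'$ whose collection may drop. If such an agent falls below $x_i'$ in $X'$, $i$ loses more than $x_i-x_i'$ from it. Then the bound $t_i(X')\ge t_i(X)-k_i(X)(x_i-x_i')$ fails in general.

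I would first try to read the hypothesis as pinning down every agent strictly below $x_i'$ \emph{in $X'$} as well. Concretely, agents with $x_j'<x_i'$ are exactly those with $x_j<x_i'$ and $x_j'=x_j$. This matches the intended ``low agents fixed'' meaning and the way the lemma will be applied, to profiles where the other agents' totals are compared via positive spillovers. Under that reading, every agent with $x_j'\ge x_i'$ contributes exactly $x_i'$ in $X'$. Their old contribution was $\min\{x_i,x_j\}\le x_i$, and all agents strictly below $x_i$ in $X$ contribute $x_j$ in both profiles. The agents weakly above $i$ in $X$, together with $i$ itself, number $k_i(X)$, and each loses at most $x_i-x_i'$. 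This gives
\[
t_i(X')\;\ge\;t_i(X)-k_i(X)\,(x_i-x_i').
\]
If the literal hypothesis does not support this, I would add the assumption explicitly, or invoke the positive-spillover argument used in Lemma~\ref{lem:upward-harm-fixed-lower-plusone} as the ``best case for the deviator.''

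\textbf{Step 3: concavity.} Weakly not wanting to move down locally means $k_i(X)\,b_i'^{-}(t_i)\ge c_i$, where $b_i'^{-}$ is the left derivative. Concavity gives, for $\Delta = k_i(X)(x_i-x_i')$,
\[
b_i(t_i)-b_i(t_i-\Delta)\;\ge\; b_i'^{-}(t_i)\,\Delta\;\ge\;c_i\,(x_i-x_i').
\]
Since $b_i$ is nondecreasing, $b_i(t_i(X'))\ge b_i(t_i-\Delta)$ could only help $i$, so we must argue in the other direction. In fact Step 1 must be an upper bound on the loss, which is what we have, so $b_i(t_i(X'))\le b_i(t_i)$ requires care.

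The correct chain uses that $t_i(X')\ge t_i-\Delta$ is the wrong direction for harm. What we need instead is an upper bound $t_i(X')\le t_i(X)-k_i(X)(x_i-x_i')$. So the real task in Step 1 is to show the loss is \emph{at least} $k_i(X)(x_i-x_i')$. This follows directly from the case split:
\begin{itemize}
\item each of the $k_i(X)-1$ agents with $x_j\ge x_i$ contributes $\min\{x_i',x_j'\}\le x_i'$ instead of $x_i$;
\item $i$'s own term drops by $x_i-x_i'$;
\item agents with $x_i'\le x_j<x_i$ contribute at most $x_i'\le x_j$;
\item agents with $x_j<x_i'$ are unchanged.
\end{itemize}
Hence $t_i(X')\le t_i-\Delta$, so the feared obstacle dissolves and this is the bound to prove. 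Then $U_i(X)-U_i(X')\ge b_i(t_i)-b_i(t_i-\Delta)-c_i(x_i-x_i')\ge 0$ by the concavity display, which proves the lemma.
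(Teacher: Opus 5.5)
Your final argument is correct. The four-case bound gives $t_i(X)-t_i(X')\ge k_i(X)(x_i-x_i')=\Delta$, and then $U_i(X)-U_i(X')\ge b_i(t_i)-b_i(t_i-\Delta)-c_i(x_i-x_i')\ge (k_i(X)\,b_i'^{-}(t_i)-c_i)(x_i-x_i')\ge 0$.

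You should, however, delete Steps 1--2 and the first half of Step 3. The lower bound on $t_i(X')$ points the wrong way and is never needed. So there is no obstacle, and no reason to reinterpret or strengthen the hypothesis. The literal hypothesis is used exactly once, in the case $x_j<x_i'$, to get $\min\{x_i',x_j'\}=x_j$. Every other opponent's contribution in $X'$ is automatically capped by $x_i'$, whatever happens to $x_j'$.

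The paper uses the same facts but organizes them differently, passing through the intermediate profile $(x_i',X_{-i})$. First, $U_i(X')\le U_i(x_i',X_{-i})$ by the opponent-by-opponent comparison $\min\{x_i',x_j'\}\le\min\{x_i',x_j\}$. Second, $U_i(\cdot,X_{-i})$ is nondecreasing on $[x_i',x_i]$: along that path each unit of collection reaches at least $k_i(X)$ recipients and totals are at most $t_i$, so the marginal payoff is at least $k_i b_i'(t_i)-c_i\ge 0$.

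You instead compare $X'$ with $X$ directly. You merge the two stages into a single aggregate bound on lost data, and you replace the path integration with one supergradient inequality for concave $b_i$ plus monotonicity of $b_i$. Your route is shorter. Phrasing the hypothesis through the left derivative also uses exactly what a marginal downward deviation delivers; the paper's right-derivative convention is slightly loose at kinks of $b_i$. The paper's route is more modular: it separates the effect of the opponents' changes from $i$'s own move. That parallels Lemma~\ref{lem:upward-harm-fixed-lower-plusone} and the way both lemmas are later combined with positive spillovers in the truthfulness and Pareto-optimality arguments.
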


\begin{proof}
Consider profile $X$, but with agent $i$ instead playing $x'_i$.
Under fair exchange, $i$ receives $\min\{x'_i,x_j\}$ from each $j\neq i$.
By assumption, all agents with $x_j< x_i'$ keep the same collection in $X'$, while agents with $x_j\ge x_i'$ may change.
Thus for every $j\neq i$,
\[
\min\{x_i',x_j'\} \;\le\; \min\{x_i',x_j\},
\]
so $i$'s total data (and hence benefit) from collecting $x'_i$ in $X'$ is weakly smaller than in $X$, while the cost is exactly $c_i x_i'$.

Now hold opponents fixed at $X_{-i}$ and increase $i$'s collection from $x_i'$ up to $x_i$.
For any intermediate level $y\in[x'_i,x_i]$, the set $\{j\neq i: x_j\ge y\}$ is a superset of $\{j\neq i: x_j\ge x_i\}$, so each marginal unit of $i$'s collection leads to receiving a unit of data from at least $k_i(X)$ agents (including themselves). Since $b_i$ is concave, $b_i'$ is nonincreasing, and totals along this path are at most $t_i$, the slope of the benefit function in this region is at least $b'(t_i)$. Hence the marginal payoff along $[x_i',x_i]$ is at least $k_ib_i'(t_i)-c_i\ge 0$, implying that $i$'s payoff in $X$ is weakly increasing as $i$ increases from $x'_i$ to $x_i$.

Combining,
\[
U_i(X) \;\ge\; U_i(x'_i, X_{-i}) \;\ge\; U_i(X'),
\]
as claimed.
\end{proof}

\subsection{Local constraints suffice}
Although Nash equilibrium is defined by the absence of \emph{any} unilateral deviation, in the fair-exchange game it is enough to check the two one-sided constraints defined in Section~\ref{sec:local-cond}, and Lemma~\ref{lem:upward-harm-fixed-lower-plusone} and Lemma~\ref{lem:downward-harm-robust} capture the reason why.

Intuitively, once $X_{-i}$ is fixed, agent $i$'s payoff as a function of $x_i$ is single-peaked: moving upward becomes (strictly) unprofitable past the upward threshold, and moving downward is (weakly) unprofitable below the downward threshold.

Formally, Lemma~\ref{lem:upward-harm-fixed-lower-plusone} implies that if the "no profitable upward deviation" inequality holds at $X$, then \emph{no larger upward jump} can be profitable (with opponents fixed), and Lemma~\ref{lem:downward-harm-robust} implies the analogous statement for downward jumps.
Thus the two inequalities from Section~\ref{sec:local-cond} are equivalent to $i$ best-responding to $X_{-i}$.

\subsection{Algorithms}
Because the fair-exchange game is supermodular in total-data space, the set of pure Nash equilibria is nonempty and forms a lattice. In particular, there is a well-defined \emph{maximal} equilibrium and a well-defined \emph{minimal} equilibrium (with respect to the product order on total-data profiles). In this section we give an explicit, polynomial-time construction of the maximal equilibrium, which is the equilibrium used by our mechanism and in our subsequent Pareto-optimality results. For completeness, we also obtain an explicit construction of the minimal equilibrium; we defer this algorithm and proof to Appendix~\ref{app:min}.

Our algorithms operate in total-data space: given the $K$-data-levels $\{s_i^K\}_{i\in[n],\,K\in\{1,\dots,n\}}$ (defined in Section~\ref{sec:model}), they output a total-data equilibrium vector $T=(t_1,\dots,t_n)$. When needed, we convert $T$ to a data-collection profile $X=(x_1,\dots,x_n)$ via the inverse transform $X=\Phi^{-1}(T)$ (Appendix~\ref{fwdinv}).

\subsubsection{Maximal Equilibrium}
\label{subsec:max-eq}

The maximal equilibrium is the greatest fixed point of the best-response operator in total-data space. The intuition behind Algorithm~\ref{alg:max} is that agents are potentially willing to collect more when others do. So we can ask: how high can the group go together? The limiting factor is a bottleneck agent—the one who is willing to go the least high even in the most favorable situation, when they can exchange with everyone. The algorithm identifies this bottleneck and fixes them at that level. With that agent fixed, we ask the same question for the remaining agents: how high can they go together, given that the fixed agents stay below. Repeating this process peels off bottlenecks one by one, building up the maximal equilibrium.

\begin{algorithm}[H]
  \SetAlgoNoLine
  \KwIn{$K$-data-levels $\{s_j^{\,K}\}_{j\in[n],\,K\in\{1,\dots,n\}}$}
  \KwOut{Total-data equilibrium vector $T \in \mathbb{R}^n_{\ge 0}$}

  $R \gets \{1,\dots,n\}$\tcp*[r]{remaining agents}
  $\textit{prev} \gets 0$\;

  \For{$\tau \gets 1$ \KwTo $n$}{
    $K \gets n-\tau+1$\tcp*[r]{rank parameter}
    select $j^{\*} \in R$ minimizing $s_j^{\,K}$\;
    $T_{j^{\*}} \gets \max\{\,s_{j^{\*}}^{\,K},\ \textit{prev}\,\}$\;
    $\textit{prev} \gets T_{j^{\*}}$\;
    $R \gets R \setminus \{j^{\*}\}$\;
  }

  \caption{Maximal Data Equilibrium}
  \label{alg:max}
\end{algorithm}

Let $T^{\max}$ be the output of Algorithm~\ref{alg:max} and let $X^{max} = \Phi^{-1}(T^{max})$. 

\begin{theorem}[Correctness of Algorithm~\ref{alg:max}]
$\Phi(T^{\max}) = X^{max}$ is a Nash equilibrium, there are no profitable deviations for agents in data collection space.
\label{thm:max-alg-ne}
\end{theorem}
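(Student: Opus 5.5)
The plan is to use the sufficiency of local constraints established above. By the subsection ``Local constraints suffice'', which rests on Lemma~\ref{lem:upward-harm-fixed-lower-plusone} and Lemma~\ref{lem:downward-harm-robust}, it is enough to check two inequalities for every agent $i$ at $T^{\max}$:
\[
t_i \ge s_i^{k_i^{\uparrow}} \quad\text{and}\quad t_i \le s_i^{k_i}.
\]
The ranks $k_i^{\uparrow}$ and $k_i$ can be read off the total-data vector, since $\Phi$ preserves the ordering. First I will check that $T^{\max}$ is feasible, i.e.\ that it lies in the image of $\Phi$. The output is nondecreasing in the order of selection because of the $\max\{\cdot,\textit{prev}\}$ step. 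Appendix~\ref{fwdinv} then shows that any nondecreasing total vector has a valid preimage: the increments $t_{(\tau)}-t_{(\tau-1)}$ divided by the number of agents at or above rank $\tau$ are nonnegative. So $X^{\max}=\Phi^{-1}(T^{\max})$ is well defined.

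Label agents by selection round $\tau$, with $K_\tau=n-\tau+1$. For agent $j_\tau$, the agents selected at rounds $\ge \tau$ weakly exceed it, so $k_{j_\tau}\ge K_\tau$. The agents strictly above it are among those selected later, so $k^{\uparrow}_{j_\tau}\le K_\tau$.

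\textbf{Downward condition.} The value is $t_{j_\tau}=\max\{s_{j_\tau}^{K_\tau},\textit{prev}\}$. If the max is attained by $s_{j_\tau}^{K_\tau}$, then $t\le s^{K_\tau}\le s^{k_{j_\tau}}$ by monotonicity of $s^K$ in $K$. Otherwise $t=\textit{prev}=t_{j_{\sigma}}$ for the last earlier round $\sigma$ at which the max was attained by the $s$-term. I expect the whole block of agents $j_\sigma,\dots,j_\tau$ to be tied, so $k_{j_\tau}=K_\sigma$. The selection rule at round $\sigma$ chose the minimizer of $s^{K_\sigma}$ among remaining agents, and $j_\tau$ was remaining, so $s_{j_\tau}^{K_\sigma}\ge s_{j_\sigma}^{K_\sigma}=t$. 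Thus $t\le s_{j_\tau}^{k_{j_\tau}}$.

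\textbf{Upward condition.} This is the step I expect to be the main obstacle, because ties make $k^{\uparrow}$ subtle. Suppose $k^{\uparrow}_{j_\tau}=K'$, so the strictly-higher agents are exactly the last $K'-1$ selected. Then $j_\tau$ ties with the agents selected just before the first strictly-higher agent, and that agent is selected at round $\rho$ with $K_\rho=K'-1$. I must show $t_{j_\tau}\ge s_{j_\tau}^{K'}$.

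At round $\rho-1$, where $K_{\rho-1}=K'$, the algorithm picked some agent $j_{\rho-1}$ tied with $j_\tau$ and minimizing $s^{K'}$ over the remaining agents. Now $j_\tau$ is either that agent or was removed earlier. If it is that agent, then $t\ge s^{K'}_{j_\tau}$ directly. If it was removed earlier at round $\tau<\rho-1$, I need $s_{j_\tau}^{K'}\le t$, but minimality at round $\rho-1$ does not cover $j_\tau$. I would instead compare through the tie: $t_{j_{\rho-1}}=t$ and $t\ge s_{j_{\rho-1}}^{K'}$. It is not clear this gives $s_{j_\tau}^{K'}\le t$.

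The likely fix is to note that a strict jump occurs at round $\rho$. This means $s^{K_\rho}_{j_\rho}>t$, and that fact constrains the previous block. I would aim to prove an invariant by induction on rounds: after round $\tau$, every remaining agent $r$ satisfies $s_r^{K_\tau}\ge \textit{prev}$, and every processed agent $j$ satisfies $s_j^{K}\le t_j$ at its eventual $k^{\uparrow}$. If this direct route fails, I would first try to reorganize the argument around blocks of tied agents, treating each tie block as a single round. Within a block, the minimizer choices at successive ranks give the needed inequalities, using that $s^K$ is nondecreasing in $K$. I will also make sure the upward inequality is strict where Lemma~\ref{lem:upward-harm-fixed-lower-plusone} requires strictness, or handle equality by a limiting or concavity argument. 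Once both conditions are verified, the sufficiency result yields that $X^{\max}$ is a Nash equilibrium.
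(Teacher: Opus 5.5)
\textbf{The gap: the upward condition.} You leave the upward condition open, and the route you sketch for it (minimality at round $\rho-1$, an inductive invariant, tie blocks) is a detour. It follows in one line from two facts you already wrote down. First, everyone strictly above $j_\tau$ was selected later, so $k^{\uparrow}_{j_\tau}\le K_\tau$. Second, $t_{j_\tau}=\max\{s^{K_\tau}_{j_\tau},\textit{prev}\}\ge s^{K_\tau}_{j_\tau}$. Since $s^K_i$ is nondecreasing in $K$, $s^{k^{\uparrow}_{j_\tau}}_{j_\tau}\le s^{K_\tau}_{j_\tau}\le t_{j_\tau}$.

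In the case that stalled you ($j_\tau$ removed at a round $\tau<\rho-1$), you simply have $K'=K_{\rho-1}<K_\tau$, so the round-$(\rho-1)$ minimizer is irrelevant. Ties never matter in this direction. This is exactly the paper's argument. The invariant you propose, $s^{K_\tau}_r\ge\textit{prev}$ for every remaining $r$, is false once the floor binds (see the example below), so that route would not have closed the gap.

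On strictness: you cannot always make the upward inequality strict. An agent tied with nobody has $t=s^{K_\tau}$ and $k^{\uparrow}=K_\tau$. This is harmless, because $b_i'(s)<c_i/K$ for every $s>s_i^K$. Hence the gain from raising $x_i$ by $\delta$ is at most $\int_t^{t+k^{\uparrow}\delta}b_i'(s)\,ds-c_i\delta<0$. The proof of Lemma~\ref{lem:upward-harm-fixed-lower-plusone} goes through with this integral bound in place of $b_i'(t_i)$.

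\textbf{The downward condition.} Your downward argument is correct, and in fact more careful than the paper's. Comparing $j_\tau$ with the first agent $j_\sigma$ of its tie block gives $t=s^{K_\sigma}_{j_\sigma}\le s^{K_\sigma}_{j_\tau}\le s^{k_{j_\tau}}_{j_\tau}$. You only need $k_{j_\tau}\ge K_\sigma$; equality can fail when $s^{K_\sigma}_{j_\sigma}$ happens to equal the earlier floor.

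The paper instead compares $(\tau)$ with $(\tau-1)$ and reduces to $T_{(\tau)}\le s^{K+1}_{(\tau)}$. That claim fails for tie blocks of length three. Take agents $a,b,c$ with $(s^1,s^2,s^3)$ equal to $(1,5,10)$, $(2,6,20)$, $(3,7,30)$. The algorithm assigns total $10$ to all three, yet $s^2_c=7<10$, while the true condition $10\le s^3_c$ holds. Your block-start comparison is the right repair. Your check that $T^{\max}$ lies in the image of $\Phi$ also fills in a step the paper leaves implicit.
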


\begin{proof}
Rename agents according to their selection order in Algorithm~\ref{alg:max} as $(1),(2),\dots,(n)$, where $(\tau)$ is the agent selected in iteration $\tau$.

We prove by induction on $\tau$ that after iteration $\tau$, every selected agent $(1),\dots,(\tau)$ satisfies the no-upward and no-downward deviation inequalities in the final profile $T^{\max}$.

\emph{Base case ($\tau=1$).}
At $\tau=1$ the algorithm uses $K=n$ and assigns $T^{\max}_{(1)}=\max\{s_{(1)}^{\,n},0\}=s_{(1)}^{\,n}$.
Since $k_{(1)}^{\uparrow}(T^{\max})\le n$, and $s_{(1)}^K$ is nondecreasing in $k$, we have
$s_{(1)}^{\,k_{(1)}^{\uparrow}(T^{\max})}\le s_{(1)}^{\,n}=T^{\max}_{(1)}$, so the no-upward inequality holds.
The no-downward inequality holds as well by the definition of the downward constraint and $s_{(1)}^K$ being nondecreasing in $K$.

\emph{Inductive step.}
Assume agents $(1),\dots,(\tau-1)$ satisfy both inequalities. Consider agent $(\tau)$.
At iteration $\tau$, the algorithm uses $K=n-\tau+1$ and sets
\[
T^{\max}_{(\tau)}\;=\;\max\{\,s_{(\tau)}^{\,K},\ T^{\max}_{(\tau-1)}\,\}.
\]

\emph{No upward deviation.}
By construction $T^{\max}_{(\tau)}\ge s_{(\tau)}^{\,K}$. In the final profile, at most the $n-\tau$ agents selected after $\tau$ can be strictly above $(\tau)$, hence
\[
k_{(\tau)}^{\uparrow}(T^{\max}) \;=\; 1+\big|\{h\neq(\tau):T^{\max}_h>T^{\max}_{(\tau)}\}\big|
\;\le\;1+(n-\tau)\;=\;K.
\]
Since $s_{(\tau)}^K$ is nondecreasing in $K$, we have $s_{(\tau)}^{\,k_{(\tau)}^{\uparrow}(T^{\max})}\le s_{(\tau)}^{\,K}$, and therefore
\[
T^{\max}_{(\tau)}\ge s_{(\tau)}^{\,K}\ge s_{(\tau)}^{\,k_{(\tau)}^{\uparrow}(T^{\max})},
\]
proving the no-upward inequality.

\emph{No downward deviation.} First, we consider the case where $T^{\max}_{(\tau)}=s_{(\tau)}^{\,K}$. Since later agents are assigned weakly higher levels, we have $k_{(\tau)}(T^{\max})\ge K$, and because $s_{(\tau)}^K$ is nondecreasing in $K$, it follows that
\[
T^{\max}_{(\tau)}=s_{(\tau)}^{\,K}\le s_{(\tau)}^{\,k_{(\tau)}(T^{\max})}.
\]

Second we consider the case where  $T^{\max}_{(\tau)}=prev =T^{\max}_{(\tau-1)}$. Then $(\tau-1)$ is weakly above $(\tau)$ and all agents selected after $\tau$ are weakly above as well, so
\[
k_{(\tau)}(T^{\max})\ge (n-\tau)+2 = K+1
\quad\Longrightarrow\quad
s_{(\tau)}^{\,k_{(\tau)}(T^{\max})}\ge s_{(\tau)}^{\,K+1}.
\]
Thus it suffices to show $T^{\max}_{(\tau)}\le s_{(\tau)}^{\,K+1}$. By induction, we know $(\tau-1)$ does not want to deviate downwards. In addition, we know from $(\tau-1)$ being selected instead of $(\tau)$ at iteration $\tau-1$, that $s_{(\tau-1)}^{\,K+1} \leq s_{(\tau)}^{\,K+1}$.
\[
T^{\max}_{(\tau)}=T^{\max}_{(\tau-1)}
\;\le\; s_{(\tau-1)}^{\,k_{(\tau-1)}(T^{\max})}
\;\le\; s_{(\tau-1)}^{\,K+1}
\;\le\; s_{(\tau)}^{\,K+1}.
\]
Combining gives
\[
T^{\max}_{(\tau)} \le s_{(\tau)}^{\,K+1} \le s_{(\tau)}^{\,k_{(\tau)}(T^{\max})},
\]
which is the no-downward inequality for $(\tau)$.
\end{proof}

\begin{theorem}[Extremality of maximal equilibrium]
In $X^{\max}$, each agent has weakly more total accessible data than they do in any other equilibrium. In other words, no other equilibrium $X'$ exists where $T' = \Phi(X')$ is higher than $T^{max}$ in \emph{any} component.
\label{thm:max-alg-extremal}
\end{theorem}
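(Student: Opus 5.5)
We show by induction along the selection order of Algorithm~\ref{alg:max} that every equilibrium is componentwise dominated by $T^{\max}$. As in the proof of Theorem~\ref{thm:max-alg-ne}, rename agents as $(1),\dots,(n)$ in the order they are selected. Recall that $T^{\max}_{(1)}\le T^{\max}_{(2)}\le\cdots\le T^{\max}_{(n)}$, since each assignment takes a $\max$ with $\textit{prev}$. Fix an arbitrary equilibrium $X'$ with $T'=\Phi(X')$. The claim to prove is
\[
T'_{(\tau)} \;\le\; T^{\max}_{(\tau)} \qquad\text{for every } \tau=1,\dots,n,
\]
which is exactly the statement of Theorem~\ref{thm:max-alg-extremal}.

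The only property of $T'$ we use is the necessary \emph{no profitable downward deviation} condition from Section~\ref{sec:local-cond}. Because $X'$ is a Nash equilibrium, every agent $i$ satisfies
\[
T'_i \le s_i^{\,k_i(T')},
\]
where $k_i(T')$ counts the agents weakly above $i$, including $i$ itself. Rankings in $X'$ and $T'$ coincide, as noted in Section~\ref{def:rankings}. We combine this with the fact that $s_i^K$ is nondecreasing in $K$.

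\emph{Inductive step (the base case is the instance $\tau=1$).} Suppose $T'_{(\sigma)}\le T^{\max}_{(\sigma)}$ for all $\sigma<\tau$, and suppose for contradiction that $T'_{(\tau)}>T^{\max}_{(\tau)}$. For each $\sigma<\tau$ we then get
\[
T'_{(\sigma)}\le T^{\max}_{(\sigma)}\le T^{\max}_{(\tau)}<T'_{(\tau)},
\]
so every earlier-selected agent lies strictly below $(\tau)$ in $T'$. Hence the agents weakly above $(\tau)$ in $T'$ all belong to $\{(\tau),\dots,(n)\}$, which gives
\[
k_{(\tau)}(T')\le n-\tau+1=K,
\]
the rank parameter used at iteration $\tau$. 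The downward condition and monotonicity in $K$ then yield
\[
T'_{(\tau)}\le s_{(\tau)}^{\,k_{(\tau)}(T')}\le s_{(\tau)}^{\,K}\le \max\{s_{(\tau)}^{\,K},\textit{prev}\}=T^{\max}_{(\tau)},
\]
a contradiction. For $\tau=1$ the induction hypothesis is vacuous and the same chain applies with $K=n$.

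The main point to handle carefully is ties in the definition of $k_i$. The argument needs the earlier agents to be \emph{strictly} below $(\tau)$ in $T'$, so that they are excluded from the weakly-above count. This is exactly what the strict inequality in the contradiction hypothesis supplies.

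Notably, the argument never uses which agent was chosen as the minimizer at each step; only the assigned values matter. It also needs only the downward condition for $T'$, so it does not rely on the lattice theory of Section~\ref{sec:supermodularity}. Combined with Theorem~\ref{thm:max-alg-ne}, it identifies $X^{\max}$ as the largest element of the equilibrium lattice.
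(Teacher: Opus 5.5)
Your proof is correct and follows essentially the same route as the paper's. Both take the first index in selection order where $T'$ exceeds $T^{\max}$, observe that all earlier-selected agents are then strictly below $(\tau)$ in $T'$, and combine the no-downward-deviation condition with monotonicity of $s^K$ in $K$. Your direct bound $k_{(\tau)}(T')\le n-\tau+1=K$ is in fact slightly cleaner than the paper's detour through $k_{(\tau)}(T^{\max})$, which the paper asserts equals $K$ even though a tie with $\textit{prev}$ can make it exceed $K$.
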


\begin{proof}
Rename agents as we do in the proof of Theorem~\ref{thm:max-alg-ne}. Suppose for contradiction that some equilibrium $T'$ satisfies
$T'_i>T^{\max}_i$ for some agent, and let $\tau$ be the first index with
$T'_{(\tau)}>T^{\max}_{(\tau)}$. For every $r<\tau$,
\[
T'_{(r)}\le T^{\max}_{(r)}\le T^{\max}_{(\tau)}<T'_{(\tau)},
\]
so all earlier agents are strictly below $(\tau)$ in $T'$. Hence
$k_{(\tau)}(T')\le k_{(\tau)}(T^{\max})$, and $K$-data-levels being nondecreasing in $K$ gives
\[
s_{(\tau)}^{\,k_{(\tau)}(T')}
\le
s_{(\tau)}^{\,k_{(\tau)}(T^{\max})}.
\]
In Algorithm~\ref{alg:max}, when $(\tau)$ is selected the rank parameter is
$K=n-\tau+1$, which equals $k_{(\tau)}(T^{\max})$ (the algorithm's assigned values
are nondecreasing), and the assignment rule sets
\[
T^{\max}_{(\tau)}=\max\{\,s_{(\tau)}^{\,K},\ \textit{prev}\,\}\ge s_{(\tau)}^{\,K}
=
s_{(\tau)}^{\,k_{(\tau)}(T^{\max})}.
\]
Therefore,
\[
s_{(\tau)}^{\,k_{(\tau)}(T')}
\le
T^{\max}_{(\tau)}
<
T'_{(\tau)},
\]
violating the no-downward-deviation constraint
$T'_{(\tau)}\le s_{(\tau)}^{\,k_{(\tau)}(T')}$.
This contradiction implies that every equilibrium $T'$ satisfies $T'\le T^{\max}$ coordinatewise.
\end{proof}

\subsubsection{Minimal Equilibrium (overview)}
\label{subsec:min-eq}

By lattice structure, there is also a well-defined minimal equilibrium $T^{\min}$.
We provide an explicit $O(n^2)$ algorithm computing $T^{\min}$, together with a proof of correctness and minimality, in Appendix~\ref{app:min}.
This equilibrium plays a secondary role in our analysis; the maximal equilibrium $T^{\max}$ is the object implemented by our mechanism and used in our Pareto-optimality results.

\subsubsection{Computational Complexity}
\label{subsec:complexity}

Algorithm~\ref{alg:max} fixes one agent per iteration, with each iteration performing a selection over the remaining agents and simple updates; this yields $O(n^2)$ time.
The minimal equilibrium can be computed in the same asymptotic time (Appendix~\ref{alg:min}).
More generally, while supermodularity guarantees the existence of extremal equilibria and supports monotone-iteration methods, such generic procedures can require exponentially many iterations in the worst case for succinctly represented games \cite{etessami2020tarski}. In contrast, the fair-exchange game has additional structure that yields the explicit $O(n^2)$ construction above.

\subsection{Connection to supermodularity and subsequent results}
\label{subsec:equilibria-connection}
Algorithm \ref{alg:max} gives an explicit construction for the maximal pure NE, whose existence is guaranteed by the lattice structure promised by supermodularity. Under positive spillovers the maximal equilibrium is Pareto-best among equilibria. The maximal equilibrium has surprisingly good properties even beyond what supermodularity promises, which we will see in our mechanism design and global Pareto-optimality analysis.

\section{Truthfulness and Pareto Optimality}
\label{sec:truthfulness}
We prove desirable results for $X^{\max}$ beyond what is guaranteed by supermodularity alone by taking a mechanism design lens as well as a broader welfare-analysis lens. 

We mention here two important properties of the output of Algorithm~\ref{alg:max} that we use in our analysis. These are fairly easy to see in the continuous case, and we re-use them for the graph and discrete extensions. 

The first is \emph{prefix-independence}, which means that two runs of Algorithm~\ref{alg:max} that select identical agents until iteration $\tau$, will place the first $\tau - 1$ agents selected at the same data collection levels. This is true by the bottom up construction of Algorithm~\ref{alg:max} together with the bottom up construction of Algorithm~\ref{alg:inverse} which computes $\Phi^{-1}$.

The second property is \emph{iteration-highest-response}. This means that the agent $i$ selected in iteration $\tau$ of the algorithm strictly does not want to deviate upwards from $x_i^{\max}$ in any profile where all agents selected before $\tau$ are fixed to their data collection levels and all remaining agents are weakly above. For Algorithm~\ref{alg:max}, we can see this property from the fact that, at iteration $\tau$, the number of remaining agents is exactly $K$ and we assign $t_i^{\max}=\max\{s_i^{K},\textit{prev}\} \geq s_i^{K}$ and the definition of $K$-data-levels.

\subsection{Mechanism Models}
We interpret the maximal-equilibrium construction as a direct mechanism. We assume that each agent knows their own cost and benefit function and can therefore privately compute their vector of $K$-data-levels (as in Definition~\ref{def:kdatalevel}). This is a sufficient summary of the benefit function for algorithmic purposes. 

Each agent reports $\hat s_i=(\hat s_i^0,\dots,\hat s_i^{n-1})$, and the mechanism computes the maximal total-data equilibrium $T^{\max}(\hat s)$ via Algorithm~\ref{alg:max}. It then defines the recommended (and in Model~1, enforced) collection profile
\[
X^{\max}(\hat s) \;:=\; \Phi^{-1}\!\big(T^{\max}(\hat s)\big).
\]
In Models~2--3, after observing $X^{\max}(\hat s)$, agents choose how much data to \emph{submit} to the exchange stage; we denote the submitted profile by $X=(x_1,\dots,x_n)$.

\begin{itemize}
  \item \textbf{Model 1 (full enforcement).}
  The mechanism enforces $X = X^{\max}(\hat s)$, i.e., each agent must submit exactly the recommended number of samples.
  \item \textbf{Model 2 (partial enforcement).}
  The recommendation $x_i^{\max}(\hat s)$ serves as both a \emph{participation threshold} and a \emph{cap}.
  If $x_i < x_i^{\max}(\hat s)$ then agent $i$ receives no data from others.
  If $x_i \ge x_i^{\max}(\hat s)$ then $i$ participates, but for the purposes of fair exchange $i$ is treated as contributing exactly $x_i^{\max}(\hat s)$ (so submitting more does not increase $i$'s exchange credit).
  This captures settings where the mechanism cannot prevent off-platform data collection, but can restrict centralized exchange to agents who meet the recommended contribution level.

  \item \textbf{Model 3 (no enforcement).}
  The mechanism only recommends $X^{\max}(\hat s)$. Agents may submit any $X$, and fair exchange is applied to the submitted data.
\end{itemize}
\paragraph{Why doesn't full enforcement imply implementability?} Model 1 sounds extremely strong because the designer can compel agents to take prescribed actions. But this power is only useful if the designer knows which action profile \emph{should} be prescribed, and in our setting the relevant costs and benefits are private information. Thus the key challenge is truthful elicitation: the mechanism must choose an enforced outcome as a function of reported preferences, and agents may have incentives to misreport whenever reports affect who bears costly collection effort versus who gets to benefit from others’ effort. For example, if the designer tries to implement the equilibria of the naturally defined "full-exchange" contract, this will not lead to a truthful mechanism.

\subsection{Truthfulness Under Enforcement}

We show that Models~1 and~2 guarantee truthful reporting of $K$-data-levels. The key idea is that the maximal equilibrium gives each agent the highest total-data level that can be sustained while still being incentive compatible. While an agent $i$ might hope to gain by inducing higher contributions from agents below them, doing so necessarily pushes $i$ lower than those agents in the construction. This eliminates the ability to turn those agents' extra collection into additional benefits.

\begin{theorem}[Truthfulness under enforcement]
\label{thm:truthful}
Under Models~1 and~2, truthful reporting is a Nash equilibrium of the reporting game: for all $i$, all $S=(s_i,s_{-i})$, and all misreports $\hat s_i$,
\[
U_i(S) \;\ge\; U_i(\hat s_i, s_{-i}),
\]
where $U_i(\hat s_i, s_{-i})$ denotes $i$'s \emph{true} payoff when the mechanism is run on reports $(\hat s_i,s_{-i})$.
\end{theorem}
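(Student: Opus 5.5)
Fix agent $i$ and the true reports $s_{-i}$ of the others. Let $X=X^{\max}(s)$ be the truthful outcome and $X'=X^{\max}(\hat s_i,s_{-i})$ the outcome under a misreport. In Model~1 agent $i$'s true payoff is $U_i(X')$. In Model~2 agent $i$ either submits less than $x'_i$ and gets no exchange, or submits at least $x'_i$ and is credited $x'_i$; submitting more than $x'_i$ only adds cost with no exchange benefit, and this is dominated by a direct argument against her outside option. So in both models it suffices to show two things: $U_i(X)\ge U_i(X')$, and $U_i(X)$ is at least $i$'s no-exchange payoff $\max_y b_i(y)-c_iy$. The second follows because $X$ is a Nash equilibrium (Theorem~\ref{thm:max-alg-ne}), and under fair exchange collecting $y$ yields at least $y$ units. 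I then split on how $x'_i$ compares with $x_i$.

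\textbf{Setup via prefix-independence.} Let $\tau$ be the iteration at which $i$ is selected in the truthful run. Let $\sigma$ be the first iteration at which the two runs select different agents, or at which $i$ is selected in the misreported run, whichever comes first. Before $\sigma$, the selected agents $j\ne i$ are chosen by comparing their own true levels among the same remaining set. So the two runs agree on a common prefix, and by prefix-independence (with Algorithm~\ref{alg:inverse}) those agents sit at identical collection levels in $X$ and $X'$.

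\textbf{Case $x'_i>x_i$.} Every agent $j\ne i$ with $x_j\le x_i$ in $X$ was selected before $i$ in the truthful run, up to ties at the level $x_i$, which need care. I will argue that the misreport cannot change these earlier selections before $i$'s position. The reason is that at each earlier step the minimizer among $R\setminus\{i\}$ is determined by true reports, and the misreport can only affect whether $i$ itself is chosen earlier. If $i$ is chosen earlier, then $i$ is placed at most at the truthful level $x_i$, contradicting $x'_i>x_i$. Hence the lower agents are fixed, and $x'_i>x_i$ holds. By iteration-highest-response, $i$ strictly does not want to move up at $X$. Lemma~\ref{lem:upward-harm-fixed-lower-plusone} then gives $U_i(X')<U_i(X)$.

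\textbf{Case $x'_i<x_i$.} Here $i$ is pulled earlier in the misreported run. All agents selected before $i$ in the misreported run form a prefix of the truthful run and keep their levels. I will check that every $j\ne i$ with $x_j<x'_i$ lies in that prefix. Since $X$ is an equilibrium, $i$ weakly does not want to deviate downward. Lemma~\ref{lem:downward-harm-robust} then yields $U_i(X')\le U_i(X)$. The case $x'_i=x_i$ follows similarly: $i$ receives $\min\{x_i,x'_j\}$ from each $j$, and agents below are fixed while agents above remain weakly above.

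\textbf{Main obstacle.} The delicate step is the combinatorial claim linking the two runs. I need to show that the misreport alters the run only by moving $i$'s selection time, so that the agents below $i$ in the relevant outcome are exactly a common prefix with identical levels. Ties and the $\max\{\cdot,\textit{prev}\}$ rule complicate this: an agent selected later may tie with $i$, and the threshold hypotheses of the two lemmas (weak versus strict) must be matched to these tie cases. My first attempt is an induction on iterations, using the fact that removing or delaying $i$ leaves the argmin over the others unchanged.
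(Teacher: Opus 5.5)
Your skeleton (prefix-independence, a split on $x_i'$ versus $x_i$, Lemma~\ref{lem:downward-harm-robust} for the downward case) matches the paper, and the downward case works. The upward case, however, has a real hole exactly where you flagged it, and your planned repair rests on a false claim. Lemma~\ref{lem:upward-harm-fixed-lower-plusone} requires every $j\neq i$ with $x_j\le x_i$ to keep its level. That includes agents selected \emph{after} $i$ in the truthful run that tie with $i$ because the $\textit{prev}$ floor binds, and a misreport need not fix them. Take $s_i^3=3$, $s_j^2=2$, $s_j^3=5$, and a third agent with $s_k^2=8$, $s_k^3=10$. The truthful run gives $T_i=T_j=3$, so $x_i=x_j=1$. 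If $i$ reports $\hat s_i^2=\hat s_i^3=100$, then $j$ is selected first with $x_j'=5/3$, while $x_i'>1$, so the lemma's hypothesis fails.

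More generally, the misreport does not ``only move $i$'s selection time''. After $i$'s truthful position, each iteration pairs a different remaining set with the same rank parameter $K$, so both the order and the levels of the other agents change. An induction claiming the argmin over the others is unchanged will therefore fail. Only the first $\tau-1$ truthful selections are common, and that is all the paper uses.

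The missing idea is to avoid comparing $X$ with $X'$ directly. Raise every agent selected after $i$ in the truthful run to at least $x_i'$; this gives $\widetilde X$ with $U_i(X')\le U_i(x_i',\widetilde X_{-i})$ by positive spillovers. Then apply the lemma to $(x_i^{\max},\widetilde X_{-i})$ versus $(x_i',\widetilde X_{-i})$, which have identical opponents, with iteration-highest-response supplying the strict upward condition. Finally, $U_i(x_i^{\max},\widetilde X_{-i})=U_i(X^{\max})$ because $i$ receives the same data in both. Equivalently, you could show every agent outside the common prefix has $x_j'\ge x_i$, and rerun the lemma's estimate with $K=n-\tau+1$ in place of $k_i^{\uparrow}(X)$.

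Your Model~2 reduction is also insufficient. In Model~2 the agent may collect beyond the recommendation off-platform. That data is not exchanged, but it does count toward her own total. So after a misreport her payoff is $\max\bigl\{\max_y[b_i(y)-c_iy],\ \max_{z\ge0}[b_i(T_i'+z)-c_i(x_i'+z)]\bigr\}$. Your two conditions control only the first term and the $z=0$ case of the second. If the misreport leaves $T_i'<s_i^1$ while $i$ still receives data, the top-up $z=s_i^1-T_i'$ can strictly beat both of your bounds, so no ``direct argument against her outside option'' closes this case.

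The paper handles it with $v_i(t)=\max_{z\ge0}(b_i(t+z)-c_iz)$. The Model~2 payoff is $v_i(T_i(\hat s))-c_ix_i(\hat s)$, $v_i$ is concave and nondecreasing, and the Model~1 argument is rerun with $v_i$ in place of $b_i$. Within your framework you could instead bound the top-up directly. For $x_i'\le x_i$, use $T_i'\le t_i(x_i',X_{-i})$ and the equilibrium property of $X$ to get $b_i(T_i'+z)-c_i(x_i'+z)\le U_i(x_i'+z,X_{-i})\le U_i(X)$. An analogous estimate, using raised opponents, works in the upward case.
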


\begin{proof}
We prove the claim for Model~1; Model~2 is proved in Appendix~\ref{app:truthfulness-continuous}.

Fix an agent $i$, fix truthful reports $s_{-i}$ of all other agents, and consider any misreport $\hat s_i$.
Let $X^{max}$ be the collection profile output by Algorithm~\ref{alg:max} under $(s_i,s_{-i})$ and let $X'$ be the output under
$(\hat s_i,s_{-i})$.  

\textbf{Case 1: $x_i'\ge x_i^{\max}$ (misreport makes $i$ weakly higher).}
Let $\tau_i$ be the iteration that $i$ is truthfully selected by Algorithm~\ref{alg:max}. Due to prefix-independence and the fact that $i$ is moved later by the misreport, all agents fixed before $\tau_i$ are assigned identical collections in the two runs. Modify $X'$ such that the $K = n - \tau_i + 1$ agents fixed after $\tau_i$ in the truthful run are weakly above $x_i'$ to get $\tilde{X}$, by positive spillovers this can only improve $i$'s utility. 

By the iteration-highest-response property, we know that $i$ does not want to deviate upwards from $x_i^{\max}$ in $\tilde{X}$. So we can apply Lemma~\ref{lem:upward-harm-fixed-lower-plusone} to the pair $(x'_i,\tilde{X})$ and $(x^{\max}_i, \tilde{X})$ to get that $U(x'_i,\tilde{X}) < U_i(x^{\max}_i, \tilde{X})$. We also know that $U_i(x^{\max}_i, \tilde{X}) = U_i(X^{\max})$ since both the data collection and the total accessible data in the two cases is the same. Combining, we get

\[
U_i(X')
\;=\;
U_i(x'_i,\widetilde X_{-i})
\;<\;
U_i(x_i^{\max},\widetilde X_{-i})
\;=\;
U_i(X^{\max}).
\]

\textbf{Case 2: $x_i'\le x_i$ (misreport makes $i$ weakly lower).}
Apply Lemma~\ref{lem:downward-harm-robust} to the pair $(x_i, X_{-i})$ and $(x_i', X_{-i}')$. The prefix-independence statement gives the required hypothesis that all agents $j\neq i$ with $x_j < x'_i$ are fixed across $X$ and $X'$. Since $X$ is a Nash equilibrium, $i$ has no profitable local downward deviation at $x_i$. Therefore Lemma~\ref{lem:downward-harm-robust} implies $U_i(X')\le U_i(X)$.

In both cases, $U_i(X')\le U_i(X)$, so no misreport improves $i$'s true payoff. Hence truthful reporting is a Nash
equilibrium.
\end{proof}

\paragraph{Lack of truthfulness without enforcement.}
Model~3 is \emph{not} truthful: an agent can understate her willingness to collect data to induce others to collect more, and then, during the exchange stage, collect more than reported and benefit from the others’ inflated collections. Appendix~\ref{app:truthfulness-continuous} presents an explicit two-agent counterexample.

\subsection{Global Pareto Optimality}

In general supermodular games with positive spillovers, the maximal equilibrium is Pareto-best \emph{among equilibria}, but may be strictly Pareto dominated by off-equilibrium profiles \cite{levin2003supermodular}. A stronger result holds for the fair-exchange game.

\begin{theorem}[Global Pareto optimality]
\label{thm:global-pareto}
As usual, let $T^{max}$ denote the output of Algorithm ~\ref{alg:max}. No other total-data vector $T'$ achieves
\[
U_i(T') \ge U_i(T^{max})\ \text{for all } i
\quad\text{and}\quad
U_j(T') > U_j(T^{max})\ \text{for some}\ j.
\]
Thus the maximal fair-exchange equilibrium is Pareto-optimal over all strategy profiles.
\end{theorem}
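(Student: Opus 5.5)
Suppose for contradiction that some profile $X'$ (with $T'=\Phi(X')$) weakly Pareto-improves on $X^{\max}$ with strict improvement for some $j$. The plan is to show that the agent selected first in Algorithm~\ref{alg:max} must keep exactly its collection level. We then peel agents off in selection order, in the same bottom-up way the algorithm builds $T^{\max}$. Rename agents $(1),\dots,(n)$ by selection order, as in the proof of Theorem~\ref{thm:max-alg-ne}.

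The key inductive claim is: if $x'_{(r)}=x^{\max}_{(r)}$ for all $r<\tau$, then $U_{(\tau)}(X')\ge U_{(\tau)}(X^{\max})$ forces $x'_{(\tau)}=x^{\max}_{(\tau)}$ together with $U_{(\tau)}(X')=U_{(\tau)}(X^{\max})$.

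\emph{Upward case.} Suppose $x'_{(\tau)}>x^{\max}_{(\tau)}$. Every agent collecting weakly less than $x^{\max}_{(\tau)}$ in $X^{\max}$ is an earlier-selected agent; ties with later agents need care, see below. By the inductive hypothesis these agents are fixed. By positive spillovers, raising the remaining agents to be weakly above only helps $(\tau)$, which gives a profile $\tilde X$ as in Case~1 of Theorem~\ref{thm:truthful}. The iteration-highest-response property says $(\tau)$ strictly does not want to move up from $x^{\max}_{(\tau)}$ in $\tilde X$. Lemma~\ref{lem:upward-harm-fixed-lower-plusone} then gives $U_{(\tau)}(X')<U_{(\tau)}(X^{\max})$, a contradiction.

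\emph{Downward case.} Suppose $x'_{(\tau)}<x^{\max}_{(\tau)}$. Agents strictly below $x'_{(\tau)}$ in $X^{\max}$ are earlier-selected, hence fixed. Since $X^{\max}$ is a Nash equilibrium (Theorem~\ref{thm:max-alg-ne}), Lemma~\ref{lem:downward-harm-robust} gives only $U_{(\tau)}(X')\le U_{(\tau)}(X^{\max})$. That weak inequality is not enough to pin $x'_{(\tau)}$.

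\emph{Handling the downward case.} I would sharpen the downward lemma here. If $x'_{(\tau)}<x^{\max}_{(\tau)}$, then $(\tau)$ loses at least $k_{(\tau)}(X^{\max})\ge 1$ units of data per unit decrease along the path, and some other agent's benefit must also change. Rather than forcing strictness for $(\tau)$, I would keep an invariant that makes the induction go through anyway. Agent $(\tau)$ lowering its collection weakly hurts every later agent, since each receives $\min\{x_\ell,x_{(\tau)}\}$, which decreases. So a lower $x'_{(\tau)}$ can only shrink the set of Pareto-improving options available to later agents. I would therefore strengthen the inductive hypothesis to: for all $r<\tau$, $x'_{(r)}\le x^{\max}_{(r)}$, and $U_{(r)}(X')=U_{(r)}(X^{\max})$. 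With this form, the upward case still applies, because lowered earlier agents only reduce what $(\tau)$ receives from below; the bound in Lemma~\ref{lem:upward-harm-fixed-lower-plusone} uses only that lower agents contribute no more. The downward case also still applies, because the comparison $\min\{x'_{(\tau)},x'_j\}\le\min\{x'_{(\tau)},x_j\}$ holds whenever lower agents weakly decrease.

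At the end of the induction every agent's utility equals its value at $X^{\max}$. This contradicts the strict improvement for $j$ and proves the theorem.

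\emph{Main obstacle.} The hardest step is managing ties and the plateau case $T^{\max}_{(\tau)}=\textit{prev}$. There, later agents may sit at exactly the same level as $(\tau)$, so the hypotheses of Lemma~\ref{lem:upward-harm-fixed-lower-plusone} ("agents weakly below are fixed") require treating a tied block jointly. My first attempt would be to process such a block of tied agents at once, arguing that none can rise without someone in the block being strictly worse off. The reason is the strict upward inequality for the first block member, via iteration-highest-response with $K$ counting the whole remaining set.
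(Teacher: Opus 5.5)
Your argument is correct and is essentially the paper's proof recast as an induction over the selection order. Your upward case (restore the earlier, weakly lowered agents, then apply iteration-highest-response and Lemma~\ref{lem:upward-harm-fixed-lower-plusone}) is the paper's Steps~2--4 for the first agent in selection order who moves up, and your downward case (positive spillovers plus the equilibrium property of $X^{\max}$) is the paper's Step~1, which shows that a profile with every agent weakly lower cannot strictly help anyone.

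The tied-block processing you flag as the main obstacle is not needed. The lemma is applied to two profiles that differ only in $(\tau)$'s coordinate, and all later agents, tied ones included, are raised above $x'_{(\tau)}$ in your $\tilde X$. This is in fact slightly more careful than the paper's $\widetilde X$, which leaves those agents at $x'_y$ and so does not literally satisfy the hypotheses of iteration-highest-response.
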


\begin{proof}
Assume for contradiction that there exists a profile $X'$ that Pareto-dominates the maximal equilibrium $X^{\max}$.

\textbf{Step 1: Not all agents can move weakly down.}
If $X' \le X^{\max}$ coordinatewise, then for each agent $j$,
\[
U_j(x'_j,X'_{-j}) \;\le\; U_j(x'_j,X^{\max}_{-j}) \;\le\; U_j(x^{\max}_j,X^{\max}_{-j}),
\]
where the first inequality is positive spillovers and the second uses that $X^{\max}$ is a Nash equilibrium.
This contradicts strict Pareto domination. Hence there exists some agent with $x'_i > x^{\max}_i$.

\textbf{Step 2: Choose the first agent (in Algorithm~\ref{alg:max} order) who moves up.}
Let $(1),(2),\dots,(n)$ denote the selection order of Algorithm~\ref{alg:max} (so $(\tau)$ is the agent selected in
iteration $\tau$), and let
\[
\tau^\star \;:=\; \min\bigl\{\tau : x'_{(\tau)} > x^{\max}_{(\tau)}\bigr\},\qquad i := (\tau^\star).
\]

\textbf{Step 3: Restore lower agents to equilibrium}
Define $\widetilde X$ by $\widetilde x_i := x'_i$, by $\widetilde x_j := x^{\max}_j$ for every $j\neq i$ with
$x^{\max}_j < x^{\max}_i$, and by $\widetilde x_y := x'_y$ for all remaining agents $y$.
Then $\widetilde X_{-i} \ge X'_{-i}$ coordinatewise, so by positive spillovers,
$U_i(X') \;\le\; U_i(\widetilde X)$.

\textbf{Step 4: agent $i$ is strictly worse off.}
Consider the profile $(x_i^{\max},\widetilde X_{-i})$ in which agent $i$ plays its equilibrium action $x_i^{\max}$
against opponents $\widetilde X_{-i}$. By construction of $\widetilde X$, every agent $j\neq i$ with
$x^{\max}_j < x^{\max}_i$ is fixed at $x^{\max}_j$ in $(x_i^{\max},\widetilde X_{-i})$.

Due to the iteration-highest-response property, we know that $i$ does not want to deviate upwards from $(x_i^{\max},\widetilde X_{-i})$.

Therefore we may apply Lemma~\ref{lem:upward-harm-fixed-lower-plusone} to the pair of profiles $(x_i^{\max},\widetilde X_{-i})$ and $(x'_i,\widetilde X_{-i})$ to get $U_i(x'_i,\widetilde X_{-i}) \;<\; U_i(x_i^{\max},\widetilde X_{-i})$.
Also note that
$U_i(x_i^{\max},\widetilde X_{-i}) \;=\; U_i(x_i^{\max},X^{\max}_{-i})$ since both data collection costs and total accessible data are the same for $i$ in these two cases. 
Combining these yields: 
\[
U_i(\widetilde X)
\;=\;
U_i(x'_i,\widetilde X_{-i})
\;<\;
U_i(x_i^{\max},\widetilde X_{-i})
\;=\;
U_i(X^{\max}).
\]
Together with $U_i(X')\le U_i(\widetilde X)$ from Step~3, this contradicts that $X'$ Pareto-dominates $X^{\max}$.
\end{proof}

The maximal fair-exchange equilibrium simultaneously admits truthful implementation under realistic enforcement and is globally Pareto-optimal in the continuous model. This is an unusually strong combination of incentive and welfare guarantees that emerges because of the game’s highly structured exchange geometry.

\section{Graph-Restricted Fair Exchange}
\label{sec:graph}

We now study fair exchange when data sharing is restricted by a fixed undirected graph $G = (V,E)$. Each agent may only exchange data with its neighbors: agent $i$ receives shared data only from agents in $N(i)$. Missing edges in the graph capture situations where agents face legal, ethical, or competitive barriers to sharing data. For example, two competing organizations may categorically refuse to exchange information, or universities and non-profits may be legally barred from sharing data with for-profit entities.

In the complete-graph model, supermodularity in total-data space guarantees a clean lattice structure of equilibria. This property does not hold for general graphs. The graph structure means that a non-neighboring agent increasing their collection level may actually hurt an agent in terms of utility, because it may reduce the collection level of their neighbor. As a consequence, the fair-exchange game on graphs is not supermodular, best responses are not monotone, and a Pareto-best equilibrium may not exist. Appendix~\ref{subsec:graph-super-counter} provides explicit examples illustrating these failures.

Despite this structural difficulty, we show that an analog of the continuous maximal-equilibrium algorithm continues to work on graphs: it produces a pure Nash equilibrium and, among all such equilibria obtainable by this construction, yields a Pareto-undominated outcome.

\subsection{Graph Restricted Model}

Under graph-restricted exchange, agent $i$’s total accessible data is
\[
t_i(x) = x_i + \sum_{j \in N(i)} \min\{x_i, x_j\}.
\]
Because only neighbors contribute to $t_i$, optimality conditions become \emph{local}: $i$'s incentives depend on its own $x_i$ and on the contributions of its neighbors. If $i$ increases $x_i$, only edges incident to $i$ are affected. This locality replaces the total-order structure used in the complete-graph construction.

Our key observation is that even though the game is not supermodular, incentive comparisons still have a monotonic structure \emph{within neighborhoods}: when neighbors' contributions rise, $i$'s marginal value of contributing also weakly rises. This property is weaker than global supermodularity but strong enough to support a constructive equilibrium algorithm.

\subsection{Optimal Data Equilibrium Algorithm for Graph-Restricted Case}

The algorithm mirrors the ideas from the continuous maximal-equilibrium construction by iteratively fixing agents data collection levels, from smallest to largest. However, due to the graph restriction, we operate in data collection space, since agents are not oblivious to which particular earlier agent collected data, it matters whether the earlier agent was a neighbor or not. 

The definition of the $K$-data-levels does not change from the continuous case. As input, we only need them for the degree of an agent plus one, rather than $n$. This is because, due to graph-restriction, $d_i(G) + 1$ the maximum number of agents that $i$ can actually exchange with (including themselves).

Similar to the continuous case, at each step, the algorithm identifies the remaining agent who should collect the least data. This is done by computing what each agent would collect under maximally favorable conditions (when all remaining agents collect more than them) and selecting the agent with the lowest level. It then locks this lowest agent in to that level (subject to a nondecreasing floor) and removes the agent from the active graph. The additional step we need for the graph case is to edit the remaining players "residual" $K$-data-levels to capture the fact that agents only exchange with their neighbors. 

\begin{algorithm}[H]
\label{alg:graph}
  \SetAlgoNoLine
  \KwIn{Graph $G=(V,E)$; for each $i\in V$, $K$-data-levels $\{s_i^{\,K}\}_{K=1}^{d_i(G) + 1}$}
  \KwOut{Contribution profile $x$}

  $H \gets G$\;
  \ForEach{$i \in V$}{
    $r_i^\bullet \gets s_i^\bullet$\tcp*[r]{residual threshold levels}
  }
  \texttt{prev} $\gets 0$\;

  \While{$V(H) \neq \emptyset$}{
    Let $i \in V(H)$ minimize
    $\rho_i := \dfrac{r_i^{\,d_i(H)+1}}{d_i(H)+1}$ (break ties deterministically)\;

    $x_i \gets \max\{\texttt{prev},\, \rho_i\}$\;
    \texttt{prev} $\gets x_i$\;

    \ForEach{$j \in N_H(i)$}{
      $r_j^\bullet \gets r_j^\bullet - x_i$\;
    }

    Delete $i$ (and its incident edges) from $H$\;
  }

  \caption{Graph-Restricted Optimal Data Equilibrium}
\end{algorithm}

\subsection{Properties of the Graph-Restricted Equilibrium}

The profile $X^{opt}$ produced by Algorithm~\ref{alg:graph} is a pure Nash equilibrium of the graph-restricted fair-exchange game and, while it is not maximal the same sense, it has many properties similar to the maximal equilibrium in the continuous setting. For the mechanism design problem analogous to the continuous case, enforcing $X^{opt}$ incentivizes truthful reporting of the $K$-data-levels. In addition, $X^{opt}$ is globally Pareto optimal, no profile (including off-equilibrium profiles) dominates it. The proofs of these claims tightly follow the proofs of Theorem~\ref{thm:truthful} and Theorem~\ref{thm:global-pareto} from the continuous case. Further details and proofs for the graph case can be found in Appendix ~\ref{app:graph}.

\section{Discrete Fair Exchange}
\label{sec:discrete}
While real-world data collection is almost always discrete, it is common when studying data exchange and federated learning to treat data as infinitely divisible. This is because the amounts of data are large enough that the discretization does not have large effects. However, for completeness, we include a study of the case where agents' collection/contribution decisions are restricted to integer units, representing a setting where relatively few samples are collected (compared to the number of agents).

Similar to the graph case, supermodularity fails in the discrete setting, even in the induced game in total-data space. Again similar to the graph case, there may not exist a Pareto-best equilibrium. We give examples illustrating these facts in Appendix~\ref{subsec:discrete-super-counter}.

Again like to the graph case, an analog of Algorithm~\ref{alg:max} works in the discrete case as well. We can compute a discrete optimal data equilibrium through a rounding and verification algorithm that fixes agents from lowest data collection level to the highest. At a high level, instead of looking for individual remaining agent that wants to collect the least data, we select the whole group of agents whose rounded-down desired collection level, denoted $m$, is the same. We consider if anyone in this group would deviate downwards if the whole group was placed at $m+1$. If not, we can fix the whole group at $m$ and if so, we can place the deviator at $m$ and continue. 

We show that the resulting profile is an \emph{exact} Nash equilibrium of the discrete game. We follow the continuous case proofs once again to argue that this profile leads to a truthful mechanism and is once again Pareto undominated among \emph{all} strategy profiles. All details and proofs for the discrete case can be found in Appendix ~\ref{app:discrete}.

\section{Discussion and Open Questions}
\label{sec:discussion}

Our results show that fair-exchange is a well-motivated contract for the simple setting where agents have the ability to sample (for a cost) from a single data pool. It can deliver truthful and Pareto-efficient data sharing, with clean structure in the continuous model and robust extensions to discrete and graph-restricted settings. At the same time, the model adopts strong simplifications about the nature of data, and relaxing these assumptions raises several important questions.

First, the model treats all data points as interchangeable. In many applications, agents collect different \emph{types} of data—clinical records, genomic sequences, behavioral traces—whose values interact in complementary or substitutable ways. Extending fair exchange to multi-type data could change both equilibrium structure and the form of truthful mechanisms.

Second, we assume that agents' samples do not overlap. This is reasonable when datasets are large, but unrealistic when populations are small or highly specialized. Overlap and correlation introduce redundancy and may alter incentives in ways that break the clean monotonicity structure underlying our equilibrium characterizations. Understanding fair exchange when marginal values depend on overlaps remains open.

Finally, our mechanism assumes agents submit genuine data. In many real-world settings, agents may fabricate, filter, or manipulate data in ways that are difficult to detect. Incorporating partial verifiability or statistical validation into the mechanism—and understanding whether truthful reporting can still be sustained—presents a challenging direction for future work.

In addition, we take a mechanism design approach to arguing that equilibria are a good solution concept for the data-exchange game. Another, more decentralized, approach is to consider the convergence of learning algorithms where players aim to learn their optimal data collection levels over time. It is known that a wide class of learning algorithms converge to equilibria in supermodular games \cite{milgromroberts90}, so this is an appealing direction. 

While fair exchange offers a simple and compelling benchmark, extending the analysis to richer forms of data—heterogeneous, overlapping, or strategically manipulable—and to alternative strategic environments remains an important open direction. 

\begin{acks}

	The authors gratefully acknowledge support from NSF grants CCF-2212233 and CCF-2332922 and from a MURI grant by ONR.

\end{acks}
\newpage
\bibliographystyle{ACM-Reference-Format}
\bibliography{bibliography}

\appendix
\input{arxiv_appendix.tex}
\end{document}

%% file: arxiv_appendix.tex
\newpage
\section{Continuous Model}
\subsection{Forward and Inverse Fair-Exchange Transforms}
\label{fwdinv}
We write $T=\Phi(X)$ for the induced total accessible data profile, and $X=\Phi^{-1}(T)$ for the inverse map, defined on the image of $\Phi$.

\begin{algorithm}
  \SetAlgoNoLine
  \KwIn{Collection profile $X=(x_1,\dots,x_n)\in \mathbb{R}^n_{\ge 0}$}
  \KwOut{Total-data profile $T=(t_1,\dots,t_n)\in \mathbb{R}^n_{\ge 0}$}

  \For{$i \gets 1$ \KwTo $n$}{
    $t_i \gets x_i$\;
    \For{$j \gets 1$ \KwTo $n$}{
      \If{$j \ne i$}{
        $t_i \gets t_i + \min\{x_i,x_j\}$\;
      }
    }
  }

  \caption{Forward Transform $\Phi(X)$}
  \label{alg:phi-fwd}
\end{algorithm}

\begin{algorithm}
  \SetAlgoNoLine
  \KwIn{Total-data profile $T=(t_1,\dots,t_n)\in \mathbb{R}^n_{\ge 0}$}
  \KwOut{Collection profile $X=(x_1,\dots,x_n)\in \mathbb{R}^n_{\ge 0}$ such that $\Phi(X)=T$, if it exists}

  Let $(1),\dots,(n)$ be an ordering of agents such that
  $t_{(1)} \le t_{(2)} \le \cdots \le t_{(n)}$\;
  $P \gets 0$\tcp*[r]{prefix sum of recovered collections}

  \For{$\tau \gets 1$ \KwTo $n$}{
    $x_{(\tau)} \gets \dfrac{t_{(\tau)} - P}{\,n-\tau+1\,}$\;
    $P \gets P + x_{(\tau)}$\;
  }
  \Return{$X$}\;
  \caption{Inverse transform $\Phi^{-1}(T)$}
  \label{alg:inverse}
\end{algorithm}

\begin{theorem}[Correctness of the inverse transform]
\label{thm:inverse-correct}
For any $X\in\mathbb{R}_{\ge 0}^n$, let $T=\Phi(X)$. Then Algorithm~\ref{alg:inverse}
returns the same $X$, up to permutation within tied coordinates.
\end{theorem}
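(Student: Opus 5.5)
Sort the agents so that $x_{(1)}\le x_{(2)}\le\cdots\le x_{(n)}$, breaking ties arbitrarily. The map $\Phi$ preserves this order: if $x_i\le x_j$ then $t_i\le t_j$, and ties in $x$ give ties in $t$. So this is also a valid sorting of $T=\Phi(X)$, which is the order Algorithm~\ref{alg:inverse} uses. One subtlety is that the algorithm may order tied $t$-values differently from our chosen order. But tied $t$-values come from tied $x$-values, because $\Phi$ is strictly order-preserving, as shown below. So any valid sorting of $T$ differs from ours only by a permutation within tied blocks, and this is exactly where the ``up to permutation within tied coordinates'' clause enters. We therefore fix the algorithm's ordering to agree with ours.

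The key step is an explicit formula for $t_{(\tau)}$ in sorted coordinates. For agent $(\tau)$, every agent $(r)$ with $r<\tau$ has $x_{(r)}\le x_{(\tau)}$, so it contributes $\min\{x_{(\tau)},x_{(r)}\}=x_{(r)}$. Every agent $(r)$ with $r>\tau$ contributes $x_{(\tau)}$. Adding the agent's own collection $x_{(\tau)}$ gives
\[
t_{(\tau)} \;=\; \sum_{r<\tau} x_{(r)} \;+\; (n-\tau+1)\,x_{(\tau)} .
\]
The ties case needs no special care: $\min\{a,a\}=a$, so it does not matter whether a tied partner is counted as below or above. This identity also gives strict monotonicity. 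We have $t_{(\tau+1)}-t_{(\tau)}=(n-\tau)\bigl(x_{(\tau+1)}-x_{(\tau)}\bigr)$, so for $\tau<n$ equal totals force equal collections, which justifies the tie claim above.

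Finally, we induct on $\tau$ with the invariant that after iteration $\tau-1$ the algorithm has recovered $x_{(r)}$ for all $r<\tau$ and $P=\sum_{r<\tau}x_{(r)}$. The base case is $P=0$. In the step, the algorithm sets
\[
x_{(\tau)}=\frac{t_{(\tau)}-P}{n-\tau+1},
\]
and the displayed identity gives exactly the true value. Updating $P$ then preserves the invariant. Recovered values are nonnegative because they equal the true $x_{(\tau)}\ge 0$.

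The main obstacle is the bookkeeping around ties. We need to show that the algorithm's arbitrary ordering of equal $t$ values is harmless, which the strict-monotonicity calculation handles. The rest is direct computation.
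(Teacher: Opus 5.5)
Your proposal is correct and follows the paper's proof essentially step for step: you use the same sorted-coordinate identity $t_{(\tau)}=\sum_{r<\tau}x_{(r)}+(n-\tau+1)x_{(\tau)}$, the same difference formula to show that sorting $T$ sorts $X$, and the same induction on the prefix sum $P$. Your treatment of ties is slightly more explicit than the paper's. You show that equal totals force equal collections, so any tie-breaking the algorithm uses is also a valid sort of $X$, and the algorithm in fact returns $X$ exactly.
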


\begin{proof}
Let $x_{(1)} \le x_{(2)} \le \cdots \le x_{(n)}$ be the coordinates of $X$ in nondecreasing order,
and let $t_{(\tau)}$ denote the corresponding totals for those same agents, i.e.\ $t_{(\tau)}=(\Phi(X))_{(\tau)}$
under the same tie-breaking.

For each rank $\tau$,
\[
t_{(\tau)}
= \sum_{j<\tau} x_{(j)} + (n-\tau+1)\,x_{(\tau)}.
\tag{A.1}
\label{A.1}
\]
Moreover,
\[
t_{(\tau+1)}-t_{(\tau)} = (n-\tau)\big(x_{(\tau+1)}-x_{(\tau)}\big)\ge 0,
\]
so sorting $T$ recovers the same rank order as sorting $X$, up to ties.

Now run Algorithm~\ref{alg:inverse} on the sorted totals $t_{(1)}\le \cdots \le t_{(n)}$.
Maintaining $P=\sum_{j<\tau} x_{(j)}$, rearranging \eqref{A.1} gives
\[
x_{(\tau)} = \frac{t_{(\tau)} - \sum_{j<\tau} x_{(j)}}{n-\tau+1}
          = \frac{t_{(\tau)} - P}{n-\tau+1},
\]
which is exactly the algorithm’s update. Thus the algorithm reconstructs all $x_{(\tau)}$ in order,
and mapping back to agent identities yields the original $X$ (with arbitrary tie-breaking within equal blocks).
\end{proof}

\subsection{Not Supermodular in Collection Space}

Consider a game with 2 agents $p_1, p_2$. Data costs for $p_1$ are 1 and data benefit function is:
\[
b(T) =
\begin{cases}
(1+\epsilon)T, & T \leq 10, \\[6pt]
10(1+\epsilon), & T > 10.
\end{cases}
\]

This setup implies that $p_1$ best response is to collect data such that their total accessible data is 10. If $p_2$ collects 5 then $p_1$'s best response is to collect 5. If $p_2$ collects 0 then $p_1$'s best response is to collect 10. This violates increasing best responses, so this framing is not supermodular.

\subsection{Minimal Equilibrium Algorithm}
To compute the minimal data equilibrium, we must define a variant of $K$-data-levels called min-$K$-data levels which represent the least amount of data agents are willing to collect under certain circumstances. The upwards and downwards constraints are the same with these min-$K$-data levels, they only differ from $K$-data-levels due to agent's \emph{indifference}.
\begin{definition}[min-$K$-data level]
\label{def:minkdatalevel}
For each agent $i$ and integer $K \ge 0$, the \emph{min-$K$-data level}, denoted $\tilde{s_i}^K$, is defined by
\[
\tilde{s_i}^K \;:=\; \min\Bigl\{\,s \ge 0 \;:\; b_i'(s)\leq\tfrac{c_i}{K}\Bigr\}.
\]
\end{definition}

\label{app:min}
\begin{algorithm}[H]
  \SetAlgoNoLine
  \KwIn{min-$K$-data-levels $\{\tilde{s}_{j}^{\,K}\}_{j\in[n],\,K\in\{1,\dots,n\}}$}
  \KwOut{Total-data equilibrium vector $T \in \mathbb{R}^n_{\ge 0}$}

  $R \gets \{1,\dots,n\}$ \tcp*[r]{remaining agents}
  $\textit{prev} \gets +\infty$\;

  \For{$K \gets 1$ \KwTo $n$}{
    select $j \in R$ maximizing $\tilde{s}_{j}^{\,K}$\;
    $t_{j} \gets \min\{\,\tilde{s}_{j}^{\,K},\ \textit{prev}\,\}$\;
    $\textit{prev} \gets t_{j}$\;
    $R \gets R \setminus \{j\}$\;
  }
  \caption{Minimal Data Equilibrium}
  \label{alg:min}
\end{algorithm}
Let $T^{\min}$ be the output of Algorithm~\ref{alg:min} and let $X^{\min}=\Phi^{-1}(T^{\min})$
\begin{theorem}[Correctness of Algorithm~\ref{alg:min}]
\label{thm:min-alg-ne}
$X^{\min}$ is a Nash equilibrium
\end{theorem}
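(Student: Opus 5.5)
We mirror the proof of Theorem~\ref{thm:max-alg-ne}, but run the induction in the opposite direction. Algorithm~\ref{alg:min} fixes agents from the top down. At iteration $K$ it selects the remaining agent with the largest $\tilde s_j^{K}$ and assigns it $\min\{\tilde s_j^{K},\textit{prev}\}$, so the assigned totals are nonincreasing in selection order.

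Rename agents $(1),\dots,(n)$ by selection order, so that $(K)$ is selected at iteration $K$ and $T^{\min}_{(1)}\ge\cdots\ge T^{\min}_{(n)}$. By the discussion in Section~\ref{sec:local-cond} and the deviation Lemmas~\ref{lem:upward-harm-fixed-lower-plusone} and~\ref{lem:downward-harm-robust}, it suffices to verify the two local conditions for every agent. These are no profitable upward deviation and no profitable downward deviation, stated with the min-levels $\tilde s$. Since the min-levels differ from $s$ only at points of indifference, the local conditions can be phrased as $T_i\ge \tilde s_i^{k_i^\uparrow}$ and $T_i\le \tilde s_i^{k_i}$. By concavity, $\tilde s_i^K$ is nondecreasing in $K$. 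We prove both conditions by induction on $K$.

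\emph{No downward deviation.} By construction $T^{\min}_{(K)}\le \tilde s_{(K)}^{K}$. The agents $(1),\dots,(K-1)$ are weakly above $(K)$, so $k_{(K)}(T^{\min})\ge K$. Monotonicity then gives $T^{\min}_{(K)}\le \tilde s_{(K)}^{k_{(K)}}$.

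\emph{No upward deviation.} Only earlier-selected agents can be strictly above $(K)$, so $k^{\uparrow}_{(K)}\le K$. We split into two cases.
\begin{itemize}
\item If $T^{\min}_{(K)}=\tilde s_{(K)}^{K}$ and no earlier agent ties with it, then $k^\uparrow_{(K)}=K$ and the condition holds with equality.
\item Otherwise some earlier agents tie with $(K)$, or $T^{\min}_{(K)}=\textit{prev}=T^{\min}_{(K-1)}$. Then $k^\uparrow_{(K)}$ is at most one plus the number of earlier agents strictly above, which is at most $K-1$. In the second subcase we must show $T^{\min}_{(K)}\ge \tilde s_{(K)}^{K-1}$. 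We chain three facts: the induction hypothesis for $(K-1)$ (no upward deviation), the monotonicity $\tilde s_{(K-1)}^{k^\uparrow_{(K-1)}}\ge$ \ldots, and the selection inequality $\tilde s_{(K-1)}^{K-1}\ge \tilde s_{(K)}^{K-1}$, which holds because $(K-1)$ was selected over $(K)$ at iteration $K-1$.
\end{itemize}

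The main obstacle is the tied-block bookkeeping in the upward condition, and we expect this to be the hardest step. When a block of agents shares a value, we need $k^\uparrow$ of each member to equal the rank of the first member of the block. We would handle this by strengthening the induction hypothesis. The strengthened claim is: for every agent $(K)$, $T^{\min}_{(K)}\ge \tilde s_{(r)}^{r}$, where $(r)$ is the first agent of its tie block (so $r\le K$). We prove it by chaining the selection inequalities $\tilde s_{(r)}^{r}\ge\cdots$ down the block, together with $k^\uparrow_{(K)}=r$.

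Base case: $K=1$ gives $T_{(1)}=\tilde s_{(1)}^1$, since $\textit{prev}=+\infty$, and the condition follows immediately.

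Finally, we convert back through $\Phi^{-1}$ (Theorem~\ref{thm:inverse-correct}). This conversion preserves ranks, so the conditions hold in collection space and $X^{\min}$ is a Nash equilibrium.
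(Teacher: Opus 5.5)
Your plan follows the paper's proof. You use the same top-down renaming and the same no-downward argument via $k_{(K)}(T^{\min})\ge K$ and monotonicity of $\tilde s$. You also split the no-upward check the same way, according to whether \textit{prev} binds.

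There is one real problem: the sub-goal you state for the binding case, $T^{\min}_{(K)}\ge \tilde s^{K-1}_{(K)}$, is false in general. You justify it by chaining $(K-1)$'s no-upward condition with the selection inequality at iteration $K-1$. That is exactly the chain the paper's proof writes. It breaks once the tie block containing $(K)$ began at some $r\le K-2$. In that case $(K-1)$'s no-upward condition only gives $T^{\min}_{(K-1)}\ge \tilde s^{\,r}_{(K-1)}$, and monotonicity points the wrong way to reach $\tilde s^{K-1}_{(K-1)}$.

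Here is a concrete instance. Take three agents with min-levels $(\tilde s^1,\tilde s^2,\tilde s^3)$ equal to $(10,12,14)$, $(5,20,25)$, $(5,15,30)$. Algorithm~\ref{alg:min} selects them in this order and outputs $(10,10,10)$. Yet $\tilde s^{2}_{(3)}=15>10$. The true condition still holds, since $k^{\uparrow}_{(3)}=1$ and $\tilde s^1_{(3)}=5\le 10$. So drop that sub-goal: the tie-block argument you flag as the main obstacle is the actual proof, not optional bookkeeping.

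Stated precisely, let $(r)$ be the first agent of the tie block of $(K)$.
\begin{itemize}
\item $T^{\min}_{(r)}=\tilde s^{\,r}_{(r)}$. Either $r=1$, so $\textit{prev}=+\infty$, or $T^{\min}_{(r)}<T^{\min}_{(r-1)}=\textit{prev}$, which forces the minimum to be attained by $\tilde s^{\,r}_{(r)}$.
\item $k^{\uparrow}_{(K)}=r$, because exactly $(1),\dots,(r-1)$ are strictly above $(K)$.
\item The inequality you need is $T^{\min}_{(K)}\ge \tilde s^{\,r}_{(K)}$. Your strengthened claim $T^{\min}_{(K)}\ge\tilde s^{\,r}_{(r)}$ holds with equality and is not what is required.
\end{itemize}
The needed inequality follows from the single selection inequality $\tilde s^{\,r}_{(r)}\ge \tilde s^{\,r}_{(K)}$ at iteration $r$, when $(K)$ was still in $R$. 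No chaining down the block is needed. Chaining would not work anyway, because consecutive selection inequalities carry different superscripts. With this, your proof is complete, and it also repairs the paper's argument for the binding case.
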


\begin{proof}[Proof (parallel to proof of Theorem~\ref{thm:max-alg-ne})]
Rename agents by the selection order of Algorithm~\ref{alg:min} as $(1),(2),\dots,(n)$, where $(K)$ is
the agent selected in iteration $K$. The algorithm maintains a running \emph{upper bound} $\textit{prev}$
and assigns
\[
T^{\min}_{(K)}=\min\{\tilde{s}_{(K)}^{\,K},\textit{prev}\},
\qquad\text{so}\qquad
T^{\min}_{(1)}\ge T^{\min}_{(2)}\ge \cdots \ge T^{\min}_{(n)}.
\]

We verify the two local equilibrium inequalities for each agent $(K)$, exactly as in the maximal case.

\emph{No-downward deviation.} By construction $T^{\min}_{(K)}\le \tilde{s}_{(K)}^{\,K}$.
In the final profile, at least the $K$ agents $(1),\dots,(K)$ are weakly above $(K)$ (since the assigned
levels are nonincreasing), hence $k_{(K)}(T^{\min})\ge K$. Monotonicity of $\tilde{s}_i^{\,\cdot}$ in $K$
then gives
\[
T^{\min}_{(K)}\le \tilde{s}_{(K)}^{\,K}\le \tilde{s}_{(K)}^{\,k_{(K)}(T^{\min})},
\]
which is the no-downward inequality.

\emph{No-upward deviation.} Only earlier-selected agents can be strictly above $(K)$, so
$k^{\uparrow}_{(K)}(T^{\min})\le K$. If $\textit{prev}$ does not bind, then
$T^{\min}_{(K)}=\tilde{s}_{(K)}^{\,K}$ and monotonicity directly implies
$T^{\min}_{(K)}\ge \tilde{s}_{(K)}^{\,k^{\uparrow}_{(K)}(T^{\min})}$.
If $\textit{prev}$ binds (so $T^{\min}_{(K)}=T^{\min}_{(K-1)}$), then $(K)$ is assigned the same level as
the previous selected agent. Since $(K-1)$ was selected to maximize $\tilde{s}_j^{\,K-1}$ among remaining agents, we have $\tilde{s}_{(K-1)}^{\,K-1}\ge \tilde{s}_{(K)}^{\,K-1}$, and combining this with the fact that $(K-1)$ satisfies its own no-upward constraint yields
$T^{\min}_{(K)}=T^{\min}_{(K-1)}\ge \tilde{s}_{(K)}^{\,K-1}\ge \tilde{s}_{(K)}^{\,k^{\uparrow}_{(K)}(T^{\min})}$.

Thus every agent satisfies both inequalities, so $T^{\min}$ (and hence $X^{\min}$) is a Nash equilibrium.
\end{proof}

\begin{theorem}[Extremality of the minimal equilibrium]
\label{thm:min-alg-extremal}
In $X^{\min}$, each agent has weakly less total accessible data than they do in any other equilibrium. In other words, no other equilibrium $X'$ exists where $T' = \Phi(X')$ is smaller than $T^{\min}$ in any component.
\end{theorem}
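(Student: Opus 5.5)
We mirror the proof of Theorem~\ref{thm:max-alg-extremal}, with all inequalities reversed. Rename agents by the selection order of Algorithm~\ref{alg:min} as $(1),\dots,(n)$, where $(K)$ is the agent selected in iteration $K$. By construction the assigned levels are nonincreasing: $T^{\min}_{(1)}\ge\cdots\ge T^{\min}_{(n)}$. Suppose for contradiction that some equilibrium $T'$ has $T'_i<T^{\min}_i$ for some agent. Let $K$ be the first index (in selection order) with $T'_{(K)}<T^{\min}_{(K)}$.

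\textbf{Rank comparison.} For every $r<K$ we have
\[
T'_{(r)}\ge T^{\min}_{(r)}\ge T^{\min}_{(K)}>T'_{(K)}.
\]
So all $K-1$ earlier-selected agents are strictly above $(K)$ in $T'$, which gives $k^{\uparrow}_{(K)}(T')\ge K$. Since min-$K$-data levels are nondecreasing in $K$ (by concavity of $b_{(K)}$, exactly as for $s_i^K$), we get
\[
\tilde s_{(K)}^{\,k^{\uparrow}_{(K)}(T')}\ge \tilde s_{(K)}^{\,K}.
\]

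\textbf{Upper bound from the algorithm.} The assignment rule gives
\[
T^{\min}_{(K)}=\min\{\tilde s_{(K)}^{\,K},\textit{prev}\}\le \tilde s_{(K)}^{\,K}.
\]
Chaining the two bounds,
\[
T'_{(K)}<T^{\min}_{(K)}\le \tilde s_{(K)}^{\,K}\le \tilde s_{(K)}^{\,k^{\uparrow}_{(K)}(T')}.
\]
This violates the no-upward-deviation constraint for $(K)$ in $T'$, stated in terms of min-$K$-data levels: $T'_{(K)}\ge \tilde s_{(K)}^{\,k^{\uparrow}_{(K)}(T')}$. Hence $T'\ge T^{\min}$ coordinatewise.

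\textbf{Main obstacle.} The step that needs care is justifying that necessary constraint with the min-levels $\tilde s$ rather than the $s$ levels. Suppose $T'_{(K)}<\tilde s_{(K)}^{\,m}$, where $m=k^{\uparrow}_{(K)}(T')$. By minimality in Definition~\ref{def:minkdatalevel}, $b'(t)>c/m$ at $t=T'_{(K)}$ (right derivative). A marginal increase of $x_{(K)}$ then raises the total at rate $m$, because the $m-1$ agents strictly above stay above for small increments. The net marginal payoff is therefore $m\,b'(t)-c>0$, which is a strictly profitable upward deviation and contradicts that $T'$ is an equilibrium. I would state this as a short preliminary claim, noting the strict inequality is exactly why $\tilde s$ (rather than $s$) is the correct threshold at indifference points.
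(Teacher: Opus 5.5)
Your proof is correct and follows the paper's argument step for step: take the first violating index in selection order, get $k^{\uparrow}_{(K)}(T')\ge K$ because the earlier-selected agents are strictly above, apply monotonicity of $\tilde s^{\,\cdot}_{(K)}$ and the assignment bound $T^{\min}_{(K)}\le \tilde s^{\,K}_{(K)}$, and contradict the no-upward-deviation constraint in $T'$. Your preliminary claim, that $T'_{(K)}<\tilde s^{\,m}_{(K)}$ forces $b'_+>c/m$ and hence a strictly profitable marginal increase, correctly justifies the min-level form of that constraint, which the paper only asserts.
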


\begin{proof}[Proof sketch (parallel to proof of Theorem~\ref{thm:max-alg-extremal})]
Rename agents by the selection order $(1),\dots,(n)$ of Algorithm~\ref{alg:min}.
Suppose for contradiction that there exists an equilibrium $T'$ with $T'_{i}<T^{\min}_{i}$ for some agent.
Let $K$ be the first index such that $T'_{(K)}<T^{\min}_{(K)}$.
Then for every $r<K$,
\[
T'_{(r)}\ge T^{\min}_{(r)}\ge T^{\min}_{(K)} > T'_{(K)},
\]
so all earlier agents are strictly above $(K)$ in $T'$, and hence $k^{\uparrow}_{(K)}(T')\ge K$.
By monotonicity of $\tilde{s}_{(K)}^{\,\cdot}$ we get
$\tilde{s}_{(K)}^{\,k^{\uparrow}_{(K)}(T')}\ge \tilde{s}_{(K)}^{\,K}$.
But Algorithm~\ref{alg:min} assigns $T^{\min}_{(K)}=\min\{\tilde{s}_{(K)}^{\,K},\textit{prev}\}\le \tilde{s}_{(K)}^{\,K}$,
so
\[
T'_{(K)}<T^{\min}_{(K)}\le \tilde{s}_{(K)}^{\,K}\le \tilde{s}_{(K)}^{\,k^{\uparrow}_{(K)}(T')},
\]
which violates the no-upward-deviation inequality for $(K)$ in $T'$.
Contradiction. Therefore no equilibrium can lie strictly below $T^{\min}$ in any coordinate.
\end{proof}
\section{Truthfulness}
\label{app:truthfulness-continuous}

This appendix proves the truthfulness claims for Models~2 in the Continuous setting, and provides a counterexample for Model~3.

\subsection{Model 2 (outside collection, non-exchangeable)}
\label{app:model2-continuous}

\begin{lemma}[Outside-option closure]
\label{lem:outside-closure}
Define
\[
v_i(t)\;=\;\max_{z\ge 0}\big(u_i(t+z)-c_i z\big).
\]
Then $v_i$ is concave and nondecreasing.
\end{lemma}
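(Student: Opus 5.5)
The plan is to work with the function $u_i$ as the (concave, nondecreasing, continuous) benefit $b_i$, so that $u_i(t+z)-c_iz$ is the payoff from topping up an accessible total $t$ with $z$ units of privately collected, non-exchangeable data. I will prove the claim in three steps: the maximum is attained, $v_i$ is monotone, and $v_i$ is concave.

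\textbf{Step 1: attainment.} Set $h_i(w):=u_i(w)-c_iw$, which is concave. Substituting $w=t+z$ gives
\[
v_i(t)=c_it+\max_{w\ge t} h_i(w).
\]
Since $\lim_{w\to\infty}u_i'(w)=0<c_i$, the right derivative of $h_i$ is eventually negative, so $h_i$ is eventually strictly decreasing. By continuity, $h_i$ attains its maximum over every ray $[t,\infty)$. Let
\[
w_i^\ast:=\max\{w\ge 0: u_i'(w)\ge c_i\},
\]
with $w_i^\ast=0$ if this set is empty. This is exactly the $1$-data level $s_i^1$. Concavity of $h_i$ makes it unimodal with a maximizer at $w_i^\ast$. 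Hence $\max_{w\ge t}h_i(w)=h_i(w_i^\ast)$ when $t\le w_i^\ast$, and $\max_{w\ge t}h_i(w)=h_i(t)$ when $t\ge w_i^\ast$. This gives the explicit form
\[
v_i(t)=\begin{cases} u_i(w_i^\ast)+c_i(t-w_i^\ast), & t\le w_i^\ast,\\ u_i(t), & t\ge w_i^\ast.\end{cases}
\]

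\textbf{Step 2: monotonicity.} For $t\le t'$ and any $z\ge0$, monotonicity of $u_i$ gives $u_i(t'+z)-c_iz\ge u_i(t+z)-c_iz$. Taking the maximum over $z$ yields $v_i(t')\ge v_i(t)$. This also follows from the explicit form, since $c_i>0$ and $u_i$ is nondecreasing.

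\textbf{Step 3: concavity.} The cleanest route is the general fact that partial maximization preserves concavity. The map $(t,z)\mapsto u_i(t+z)-c_iz$ is jointly concave on the convex set $\mathbb{R}_{\ge0}\times\mathbb{R}_{\ge0}$, because it is a concave function of an affine map plus a linear term. Concretely, take maximizers $z,z'$ for $t,t'$ and $\lambda\in[0,1]$. The point $\lambda z+(1-\lambda)z'$ is feasible for $\lambda t+(1-\lambda)t'$. Concavity of $u_i$ then gives
\[
v_i(\lambda t+(1-\lambda)t')\ge \lambda v_i(t)+(1-\lambda)v_i(t').
\]
As a sanity check against the explicit form, $v_i$ is linear with slope $c_i$ on $[0,w_i^\ast]$ and equals the concave $u_i$ on $[w_i^\ast,\infty)$. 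At the junction the right slope is $u_i'(w_i^\ast)\le c_i$, where $u_i'$ is the right derivative; this holds because for $w>w_i^\ast$ we have $u_i'(w)<c_i$, and right-continuity of the right derivative of a concave function passes this to $w_i^\ast$. So slopes are nonincreasing across the junction and concavity holds there as well.

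The only delicate point is Step 1: we need the maximum to exist, so that $v_i$ is finite and well defined, and we need care with one-sided derivatives at kinks of $u_i$. Both are handled by the assumption $b_i'(t)\to0$ together with the right-derivative convention adopted in Section~\ref{sec:model}.
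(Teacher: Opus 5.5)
Your proof is correct and follows essentially the same route as the paper. Attainment comes from $b_i'\to 0$ and $c_i>0$, monotonicity from the pointwise comparison $u_i(t'+z)-c_iz\ge u_i(t+z)-c_iz$, and concavity from feeding $\lambda z+(1-\lambda)z'$ into the jointly concave map $(t,z)\mapsto u_i(t+z)-c_iz$; the explicit piecewise formula for $v_i$ is a correct extra the paper does not give, though you should read $w_i^\ast$ as a supremum, since the set $\{w: u_i'(w)\ge c_i\}$ defined with right derivatives can be half-open (the same issue affects the paper's own $K$-data levels).
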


\begin{proof}
Let $g(t,z)=u_i(t+z)-c_i z$. Since $u_i$ is concave and $t+z$ is affine, $g$ is jointly concave.

Fix $t\ge 0$. Because $b_i'(t)\to 0$ as $t\to\infty$ and $c_i>0$, we have
$\frac{d}{dz}g(t,z)=b_i'(t+z)-c_i<0$ for all sufficiently large $z$, so the maximum over $z\ge 0$ is attained.

Concavity: for any $t_1,t_2\ge 0$, $\lambda\in[0,1]$, and any $z_1,z_2\ge 0$,
\[
g(\lambda t_1+(1-\lambda)t_2,\ \lambda z_1+(1-\lambda)z_2)
\ge
\lambda g(t_1,z_1)+(1-\lambda)g(t_2,z_2).
\]
Taking $\max_{z_1,z_2\ge 0}$ on the right and using $\lambda z_1+(1-\lambda)z_2\ge 0$ yields
\[
v_i(\lambda t_1+(1-\lambda)t_2)\ge \lambda v_i(t_1)+(1-\lambda)v_i(t_2).
\]
Monotonicity: if $t'\ge t$ then $g(t',z)\ge g(t,z)$ for all $z$ since $b_i$ is nondecreasing, hence
$v_i(t')\ge v_i(t)$.
\end{proof}

\begin{lemma}[Model 2 payoff representation]
\label{lem:model2-rep}
Fix any report profile $\hat s$, and let $(x(\hat s),T(\hat s))$ denote the mechanism's on-platform outcome.
In Model~2, after optimizing outside collection, agent $i$'s realized payoff equals
\[
U_i^{M2}(\hat s)\;=\;v_i(T_i(\hat s))\;-\;c_i x_i(\hat s).
\]
\end{lemma}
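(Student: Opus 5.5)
We need to show that in Model~2, once agent $i$ optimizes its outside collection, its payoff is $v_i(T_i(\hat s)) - c_i x_i(\hat s)$. The plan is to fix the reports $\hat s$, hence the recommendation $x(\hat s)$ and the on-platform totals $T(\hat s)=\Phi(x(\hat s))$. We then write agent $i$'s total collection as an on-platform part plus a residual off-platform part, and reduce its optimization to the maximization that defines $v_i$ in Lemma~\ref{lem:outside-closure}. Throughout we read $u_i$ in that lemma as the benefit function $b_i$.

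\textbf{Step 1: decompose $i$'s choice.} Agent $i$ chooses a total collection $y\ge 0$, and it is optimal to submit all of it; under Model~2 extra submission is simply uncredited. There are two cases.

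If $y\ge x_i(\hat s)$, agent $i$ participates and is credited with exactly $x_i(\hat s)$. Other agents' submissions are capped at their recommendations, and they lose nothing by meeting the threshold. So the data $i$ receives from each $j$ is $\min\{x_i(\hat s),x_j(\hat s)\}$, and $i$'s total accessible data is $T_i(\hat s)+z$ with $z:=y-x_i(\hat s)\ge 0$. The extra $z$ units are $i$'s own samples, used privately and never exchanged. Hence $i$'s payoff equals $b_i(T_i(\hat s)+z)-c_i x_i(\hat s)-c_i z$. Maximizing over $z\ge0$ gives exactly $v_i(T_i(\hat s))-c_ix_i(\hat s)$.

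If $y<x_i(\hat s)$, agent $i$ receives nothing and gets $b_i(y)-c_iy$. Here we have to be explicit about what "realized payoff" means. If the lemma is stated for an agent who participates, as in the on-platform outcome, the first case is the whole story. Otherwise we compare the two cases: the nonparticipation payoff is at most $v_i(0)$, since it is $g(0,y)$.

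\textbf{Step 2: attainment and consistency.} Existence of the maximizer over $z$ follows from the proof of Lemma~\ref{lem:outside-closure}: $b_i'\to0$ while $c_i>0$. We should also note that the other agents' on-platform outcomes do not depend on $i$'s outside collection, because of the cap. Therefore $T_i(\hat s)$ is indeed fixed by the reports alone.

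\textbf{Main obstacle.} The delicate point is the participation case split: showing that the formula describes the realized payoff, rather than being dominated by dropping out. We would resolve this by interpreting the lemma as describing the participating agent's payoff, which is what the truthfulness argument needs, and by separately noting the outside-option comparison $v_i(0)$ where it is used. The algebraic identification in Step~1 itself is routine.
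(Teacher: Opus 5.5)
\textbf{Comparison with the paper's proof.} Your core computation matches the paper's. You pin $i$'s credited contribution at $x_i(\hat s)$, treat any excess $z\ge 0$ as private, non-exchangeable data, and maximize $b_i(T_i(\hat s)+z)-c_iz$ over $z$ to get $v_i(T_i(\hat s))$.

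The real difference is the below-threshold branch, and there your treatment is the sound one. The paper dismisses that branch by saying that submitting less than $x_i(\hat s)$ cannot beat submitting $x_i(\hat s)$ and discarding the received data, ``since $u_i$ is nondecreasing''. That comparison ignores the cost saving $c_i\bigl(x_i(\hat s)-y\bigr)$, and it is false for some report profiles. If $i$ misreports very large levels, $x_i(\hat s)$ becomes so large that, because $b_i'\to 0$, participating yields less than $v_i(0)$, which $i$ secures by dropping out.

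So what is actually true is that $i$'s optimal Model~2 payoff is $\max\{v_i(T_i(\hat s))-c_ix_i(\hat s),\,v_i(0)\}$. The lemma's formula describes only the participation branch. That is exactly your reading, with your bound $b_i(y)-c_iy=g(0,y)\le v_i(0)$ covering the other branch.

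\textbf{What each version buys.} The paper's version gives an unconditional identity that plugs directly into Proposition~\ref{prop:model2-truthful-cont}, but it is not justified as stated. Yours is correct, but it obliges you to make the ``where it is used'' step explicit. This is easy:
\begin{itemize}
\item At truthful reports the formula holds exactly. Since $X^{\max}$ is a fair-exchange equilibrium and deviating to any $y$ there yields at least $b_i(y)-c_iy$, we get $b_i(T_i(s))-c_ix_i(s)\ge\sup_{y\ge0}\bigl(b_i(y)-c_iy\bigr)=v_i(0)$. Also $v_i(T_i(s))\ge b_i(T_i(s))$.
\item Taking a maximum with the common term $v_i(0)$ preserves the inequality delivered by the Model~1 argument. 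Hence no misreport is profitable, whether or not it is followed by dropping out.
\end{itemize}

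\textbf{One small correction.} The phrase ``they lose nothing by meeting the threshold'' is not a valid reason for the other agents to participate, since meeting it costs them $c_jx_j(\hat s)$. What you actually use is that they follow the recommended strategy, which is built into the lemma's ``on-platform outcome''.
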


\begin{proof}
Under Model~2, submitting less than the required amount yields no data from others, and therefore cannot be better than submitting the required amount and then (if desired) discarding received data, since $u_i$ is nondecreasing.
Submitting more than the required amount does not change exchange outcomes (outside data is non-exchangeable) and does not affect cost, so we can ignore this case. 
Hence the on-platform contribution is effectively pinned to $x_i(\hat s)$, and the only remaining decision is how much outside data $z\ge 0$ to collect privately, which adds $z$ to $i$'s accessible total and costs $c_i z$.
Optimizing over $z$ yields $v_i(T_i(\hat s))$, giving the stated payoff.
\end{proof}

\begin{proposition}[Model 2 truthfulness in the Continuous setting]
\label{prop:model2-truthful-cont}
In the Continuous setting under Model~2, truthful reporting is a Nash equilibrium.
\end{proposition}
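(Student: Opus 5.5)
By Lemma~\ref{lem:model2-rep}, the Model~2 reporting game is the Model~1 reporting game with each benefit $b_i$ replaced by the closure $v_i$. So the plan is to rerun the Model~1 proof of Theorem~\ref{thm:truthful} with $v_i$ in place of $b_i$. The structural ingredients there are the deviation Lemmas~\ref{lem:upward-harm-fixed-lower-plusone} and~\ref{lem:downward-harm-robust}, prefix-independence, positive spillovers, and iteration-highest-response. These rely only on $b_i$ being concave and nondecreasing, and Lemma~\ref{lem:outside-closure} gives exactly these two properties for $v_i$. Fix $i$, truthful reports $s_{-i}$, and a misreport $\hat s_i$. Let $X^{\max},T^{\max}$ be the truthful outcome and $X',T'$ the outcome under the misreport. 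We must show $v_i(T'_i)-c_ix'_i\le v_i(T^{\max}_i)-c_ix^{\max}_i$.

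The key step is to check that the truthful $K$-data-levels $s_i^K$ computed from $b_i$ still carry the needed marginal information for $v_i$. We compute $v_i'(t)=\max\{b_i'(t),c_i\}$ in the right-derivative sense: $v_i(t)=b_i(t)$ where $b_i'(t)\ge c_i$, and $v_i$ is linear with slope $c_i$ below the point $s_i^1$. For $K\ge 1$ we have $c_i/K\le c_i$, and $s_i^K\ge s_i^1$. Hence for $t\ge s_i^1$, $v_i'(t)=b_i'(t)$, and the threshold $\{t: v_i'(t)\ge c_i/K\}$ has the same supremum $s_i^K$. Every total assigned by Algorithm~\ref{alg:max} to $i$ with rank parameter $K\ge1$ is at least $s_i^K\ge s_i^1$. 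Therefore the local conditions used in the deviation lemmas hold for $v_i$ exactly when they hold for $b_i$ at $T^{\max}$:
\begin{itemize}
\item the strict upward condition $k\,v_i'(t)<c_i$;
\item the weak downward condition $k\,v_i'(t)\ge c_i$.
\end{itemize}
Points below $s_i^1$ only appear along downward paths. There $v_i'=c_i\ge c_i/k$ only helps the monotonicity argument in Lemma~\ref{lem:downward-harm-robust}.

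With this in hand, split into the same two cases as in Theorem~\ref{thm:truthful}.

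\textbf{Case $x'_i\ge x^{\max}_i$.} Prefix-independence fixes all agents selected before $i$. Raise the later agents weakly above $x'_i$ to form $\tilde X$, which can only help $i$ by positive spillovers. Iteration-highest-response, read with $v_i$ through the preceding paragraph, gives a strict no-upward condition at $x_i^{\max}$. Lemma~\ref{lem:upward-harm-fixed-lower-plusone}, applied with $v_i$, then yields strict loss.

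\textbf{Case $x'_i\le x^{\max}_i$.} Prefix-independence gives the hypothesis of Lemma~\ref{lem:downward-harm-robust}. The no-downward condition at $X^{\max}$ transfers to $v_i$, so Lemma~\ref{lem:downward-harm-robust}, applied with $v_i$, gives weak loss.

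The main obstacle is the transfer of marginal conditions from $b_i$ to $v_i$, in particular the strict upward inequality when $T^{\max}_i$ equals $s_i^K$ exactly, and the $\max\{\cdot,\textit{prev}\}$ floor. I would handle this through the identity $v_i'=\max\{b_i',c_i\}$ together with $T^{\max}_i\ge s_i^1$.

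One more point must be checked: an agent deviating by submitting below threshold gets nothing on-platform. Lemma~\ref{lem:model2-rep} already shows this is dominated. The agent's report affects only the on-platform outcome, so the comparison above is complete.
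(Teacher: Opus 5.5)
This is correct and is essentially the paper's proof: the paper also reduces Model~2 to Model~1 with $v_i$ in place of $b_i$ via Lemmas~\ref{lem:model2-rep} and~\ref{lem:outside-closure} and then reruns Theorem~\ref{thm:truthful}. Your extra check that $v_i$ and $b_i$ have the same $K$-data-levels for every $K\ge 1$ makes explicit what the paper calls ``verbatim'': the mechanism run on truthful $b_i$-levels is exactly the Model~1 mechanism for $v_i$.

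Two small fixes are needed. First, the right derivative is $v_i'=\min\{b_i',c_i\}$, not $\max$, so $v_i=b_i$ where $b_i'\le c_i$; this is the picture your slope-$c_i$-below-$s_i^1$ description actually uses. Second, when $T^{\max}_i=s_i^K$ exactly, the upward inequality need not be strict. The weak form of Lemma~\ref{lem:upward-harm-fixed-lower-plusone} still gives $U_i(X')\le U_i(X^{\max})$, which is all truthfulness requires.
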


\begin{proof}
By Lemma~\ref{lem:model2-rep}, the reporting-stage objective under Model~2 is $v_i(T_i(\hat s)) - c_i x_i(\hat s)$.
By Lemma~\ref{lem:outside-closure}, $v_i$ satisfies the same concavity/monotonicity assumptions used in the proof of Theorem~\ref{thm:truthful}.
Therefore the argument of Theorem~\ref{thm:truthful} applies verbatim with $u_i$ replaced by $v_i$, implying no profitable misreport.
\end{proof}

\subsection{Model 3 (no enforcement): counterexample}
\label{app:model3-counterexample}

Note that an agent's strategy is two-phase - first to report their $k$-data-levels honestly and second to play what the recommendation suggests. We denote this full strategy as truthful. Truthful strategies do not form an NE in Model 3 as they do in Models 1 and 2.

\begin{proof}
Consider two agents with costs $c_1=c_2=1$ and step-value utilities
\[
u_1(t)=100\cdot \mathbf{1}\{t\ge 10\},
\qquad
u_2(t)=20\cdot \mathbf{1}\{t\ge 8\}.
\]
Under truthful reporting, the mechanism recommends $(x_1,x_2)=(6,4)$.
Under a misreport by agent~1 claiming it derives no benefit from data, the mechanism recommends $(x_1,x_2)=(0,8)$.

If agent~1 reports truthfully and then plays the recommendation, its payoff is $100-6=94$.
If agent~1 instead misreports so that the recommendation becomes $(0,8)$, then by assumption agent~2 plays $x_2=8$.
Agent~1 (the deviator) then plays its true best response to $x_2=8$, which is $x_1=5$, achieving its threshold while paying cost $5$, for payoff $100-5=95$.
Since $95>94$, truthful reporting and playing the recommendation is not a best response for agent~1.
\end{proof}

\section{Graph Case}
\label{app:graph}
We define new neighbor-based rankings functions $\kappa$, analogous to Section~\ref{def:rankings}, that capture the relevant quantity in the graph case.
\[
\kappa_i(x)\;:=\;1+\bigl|\{\,j\in N_G(i):\; x_j \ge x_i\,\}\bigr|
\;
\qquad\text{(\# neighbors weakly above, plus self),}
\]
\[
\kappa_i^{\uparrow}(x)\;:=\;1+\bigl|\{\,j\in N_G(i):\; x_j > x_i\,\}\bigr|
\;
\qquad\text{(\# neighbors strictly above, plus self).}
\]

We use these neighbor-based rankings to define the local no deviation conditions in the graph case. Due to Lemmas~\ref{lem:upward-harm-fixed-lower-plusone} and  \ref{lem:downward-harm-robust} still holding in the graph case (the proofs go through unchanged), if these local conditions hold for all agents, a profile is a Nash equilibrium.

\noindent\textbf{No profitable upward deviation:}
$t_i(x^\ast)\ge s_i^{\,\kappa_i^{\uparrow}(x^\ast)}$ for all $i\in V$.

\noindent\textbf{No profitable downward deviation:}
$t_i(x^\ast)\le s_i^{\,\kappa_i(x^\ast)}$ for all $i\in V$.

\subsection{Graph Supermodularity Counterexamples}
\label{subsec:graph-super-counter}
In the graph-restricted model, the limitations on who can exchange data break both supermodularity in total data space and imply that there is no Pareto-best equilibrium (equilibria may be incomparable in terms of agent utilities). We give two examples that illustrate these phenomena.

\begin{example}[Non monotone BR in the graph game (total data space)]
\label{ex:graph-nonmonotone-br}
3 agents, $p_1, p_2, p_3$ on a graph. 
\begin{center}
\includegraphics[width=0.3\linewidth]{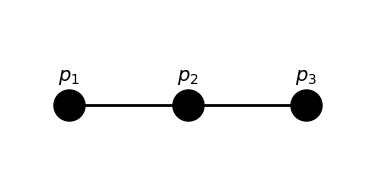}
\end{center}
Consider $p_1$ with data costs 1 and data benefit function (with $\epsilon$ small):
\[
b(T) =
\begin{cases}
T, & T \leq 14, \\
14+\frac{2T}{3}, & T > 14, \\
14+\frac{32}{3}, & T > 16, 
\end{cases}
\]
In the first case, agents 2 and 3 collect $(8, 5)$. Then, the best response for $p_1$ is to collect $8$ and get total data $16$, since their neighbor is collecting $8$. So the total data vector is $(16, 16, 10)$

In the second case, agents 2 and 3 collect $(7, 93)$. Then, the best response for $p_1$ is to collect $7$ and get total data $14$. So the total data vector is $(14, 21, 100)$. So the best responses are not monotone in total data space (an agent does not necessarily access more data when their opponents access more data). 
\end{example}

\begin{example}[No Pareto-best equilibrium]
\label{ex:graph-incomparable-eq}
5 players $p_1, p_2, p_3, p_4, p_5$ arranged on a graph.
\begin{figure}[H]
  \centering
  \includegraphics[width=0.3\linewidth]{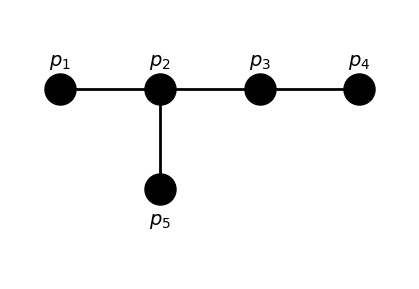}
\end{figure}
$K$-data-levels according to the below table.
\begin{center}
\begin{tabular}{c|ccccc}
\hline
$K$ & $p_1$ & $p_2$ & $p_3$ & $p_4$ & $p_5$ \\
\hline
1 & 0   & 10  & 10  & 21 & 15 \\
2 & 100 & 100 & 10  & 21 & 16 \\
3 & -   & -   & 157 & 21 & -  \\
4 & -   & -   & 157 & -  & -  \\
\hline
\end{tabular}
\end{center}
Then $X_1 = (5, 5, 8, 8, 5)$ and $X_2 = (50, 50, 7, 8, 50)$ are incomparable equilibria. The corresponding total data accessible profiles are $T_1 = (10, 10, 21, 16, 10)$ and $T_2 =(100, 157, 21, 15, 100)$. We can observe that $p_3$ is strictly better off in $X_2$ (collects less and total accessible is the same) and $p_4$ is strictly worse off in $X_2$ (collects same and total accessible is less). There is no equilibrium where both achieve their maximal utility and therefore no Pareto-best equilibrium.

\end{example}

\subsection{Graph Algorithm Correctness}
\begin{theorem}[Correctness of Algorithm~\ref{alg:graph}]
\label{thm:graph-alg-ne}
The profile $X^{opt}$ returned by Algorithm~\ref{alg:graph} is a Nash equilibrium.
\end{theorem}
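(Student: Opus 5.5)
Following the continuous case, I will reduce the claim to the two local conditions stated above for the graph setting. By Lemmas~\ref{lem:upward-harm-fixed-lower-plusone} and~\ref{lem:downward-harm-robust}, it suffices to show for every $i$ that $t_i(X^{opt})\ge s_i^{\kappa_i^{\uparrow}}$ and $t_i(X^{opt})\le s_i^{\kappa_i}$.

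Rename agents by selection order $(1),\dots,(n)$. The floor \texttt{prev} makes $x_{(1)}\le\cdots\le x_{(n)}$. First I establish an invariant for the residual thresholds. When $(\tau)$ is selected, its residual is $r_{(\tau)}^{K}=s_{(\tau)}^{K}-\sum_{j\in N_G((\tau)),\,j\text{ earlier}}x_j$. The neighbours of $(\tau)$ still in $H$ number $d:=d_{(\tau)}(H)$, and all of them are later, hence weakly above $(\tau)$. Set $K:=d+1$ and $P:=\sum_{j\text{ earlier nbr}}x_j$. If all later neighbours sit weakly above $x_{(\tau)}$, then
\[
t_{(\tau)}=P+K\,x_{(\tau)}.
\]
Note that an earlier neighbour tied with $(\tau)$ contributes $x_j=x_{(\tau)}$ either way, so the formula is unaffected. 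Hence $x_{(\tau)}\ge\rho_{(\tau)}=r^K_{(\tau)}/K$ gives $t_{(\tau)}\ge s^K_{(\tau)}$.

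\textbf{Upward condition.} Only later neighbours can be strictly above $(\tau)$, so $\kappa^{\uparrow}_{(\tau)}\le K$. Monotonicity of $s^{\cdot}$ then gives $s^{\kappa^{\uparrow}}\le s^K\le t_{(\tau)}$.

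\textbf{Downward condition, unbound case.} If $x_{(\tau)}=\rho_{(\tau)}$, then $t_{(\tau)}=s^K_{(\tau)}$ and $\kappa_{(\tau)}\ge K$, so $t_{(\tau)}\le s^{\kappa_{(\tau)}}_{(\tau)}$.

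\textbf{Downward condition, bound case (main obstacle).} Suppose $x_{(\tau)}=\texttt{prev}=x_{(\tau-1)}>\rho_{(\tau)}$. Let $\sigma<\tau$ be the first index with $x_{(\sigma)}=x_{(\tau)}$; agent $(\sigma)$ was unbound, so $x_{(\sigma)}=\rho_{(\sigma)}$. At step $\sigma$, agent $(\tau)$ was present with $\rho$-value at least $\rho_{(\sigma)}$, by the minimality used in selection. I must argue that this remains true at the later step, up to the tied block, in the form
\[
\frac{s^{K'}_{(\tau)}-P'}{K'}\;\ge\; x_{(\tau)},
\]
where $K'=1+\#\{\text{nbrs } j \text{ with } x_j\ge x_{(\tau)}\}=\kappa_{(\tau)}$ and $P'$ is the sum over neighbours strictly below. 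Since $t_{(\tau)}=P'+K'x_{(\tau)}$, this inequality is exactly the downward condition.

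The tool is a one-step stability claim. When the current agent $i$ is deleted at level $x_i$, a neighbour $j$ changes from
\[
\frac{r_j^{d+1}}{d+1}\quad\text{to}\quad\frac{r_j^{d}-x_i}{d}.
\]
I need to show that if the old value was $\ge x_i$, the new one is still $\ge x_i$. Concretely, $r^{d+1}_j\ge(d+1)x_i$ should imply $r^{d}_j-x_i\ge d\,x_i$. This needs $s^{d}$ versus $s^{d+1}$, which fails in general because $s^d\le s^{d+1}$. So I will instead track a weaker quantity: the fixed comparison $(s^{\kappa}_j-P_j)/\kappa$ with $\kappa$ counting neighbours weakly above. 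A neighbour deleted at a tied level does not change $\kappa$ or $P$ (tied neighbours count as weakly above), so the inequality persists through the tied block. Neighbours deleted strictly below only lower $\kappa$ by one while adding $x_j<x_{(\tau)}$ to $P$. This is exactly the single-tie-block argument of Theorem~\ref{thm:max-alg-ne}, and I expect making this bookkeeping rigorous to be the crux.
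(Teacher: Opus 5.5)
Your proposal is correct and matches the paper's proof. You reduce to the two local conditions, obtain the upward condition and the unbound downward case from $x_i\ge\rho_i$ together with $\kappa_i^{\uparrow}\le d_i(H)+1\le\kappa_i$, and handle the floor-bound case by returning to the iteration $\sigma$ at which \texttt{prev} first reached $x_{(\tau)}$ and using the minimality of $\rho_{(\sigma)}$ there. The bookkeeping you flag as the crux is immediate. At step $\sigma$ the already-deleted neighbours of $(\tau)$ are exactly those strictly below $x_{(\tau)}$, and the remaining ones are exactly those ending weakly above it. So the value of $\rho_{(\tau)}$ at step $\sigma$ already equals $\bigl(s^{\kappa_{(\tau)}}_{(\tau)}-P'\bigr)/\kappa_{(\tau)}$, and the ``deleted strictly below'' case never arises after $\sigma$.
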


\begin{proof}
Recall that the algorithm maintains a running floor \texttt{prev}, so assignments are
nondecreasing in deletion order.

\emph{No upward deviation.}
Fix a vertex $i$, and let $H$ be the residual graph right before $i$ is deleted.
By monotonicity, every later neighbor $j\in N_H(i)$ satisfies $x_j^{\mathrm{opt}}\ge x_i^{\mathrm{opt}}$ and every earlier
neighbor $u\in N_G(i)\setminus V(H)$ satisfies $x_u^{\mathrm{opt}}\le x_i^{\mathrm{opt}}$. Hence in the final profile $X^{\mathrm{opt}}$,
\[
t_i(X^{\mathrm{opt}})
= x_i^{\mathrm{opt}}+\sum_{u\in N_G(i)\setminus V(H)}x_u^{\mathrm{opt} }+\sum_{j\in N_H(i)}\min\{x_i^{\mathrm{opt}},x_j^{\mathrm{opt}}\}
=\sum_{u\in N_G(i)\setminus V(H)}x_u^{\mathrm{opt}}+(d_i(H)+1)x_i^{\mathrm{opt}} .
\]
By construction of the residual thresholds, $r_i^K$ equals $s_i^K$ minus the total contribution
of already-fixed neighbors; in particular,
\[
s_i^{\,d_i(H)+1}
=\sum_{u\in N_G(i)\setminus V(H)}x_u^{\mathrm{opt}} + r_i^{\,d_i(H)+1}.
\]
Therefore $t_i(X^{\mathrm{opt}})\ge s_i^{\,d_i(H)+1}$ is equivalent to
$(d_i(H)+1)x_i^{\mathrm{opt}}\ge r_i^{\,d_i(H)+1}$, i.e.\ $x_i^{\mathrm{opt}}\ge \rho_i$, where
$\rho_i := r_i^{\,d_i(H)+1}/(d_i(H)+1)$. Since the algorithm sets
$x_i^{\mathrm{opt}}=\max\{\texttt{prev},\rho_i\}$, we have $x_i^{\mathrm{opt}}\ge \rho_i$ and hence $t_i(X^{\mathrm{opt}})\ge s_i^{\,d_i(H)+1}$.
Finally, $\kappa_i^{\uparrow}(X^{\mathrm{opt}})\le d_i(H)+1$ and $s_i^K$ is nondecreasing in $K$, so
$t_i(X^{\mathrm{opt}})\ge s_i^{\,\kappa_i^{\uparrow}(X^{\mathrm{opt}})}$. Thus no vertex has a profitable upward deviation.

\emph{No downward deviation.} Fix a vertex $i$ again and consider two cases.
 
First, we consider the care where $\rho_i>\texttt{prev}$, so $x_i^{\mathrm{opt}}=\rho_i$. Then $x_i^{\mathrm{opt}}$ is strictly larger than all earlier assigned levels, so no earlier neighbor can tie at $x_i^{\mathrm{opt}}$.
Thus $\kappa_i(X^{\mathrm{opt}})=d_i(H)+1$. Moreover, $(d_i(H)+1)x_i^{\mathrm{opt}}=r_i^{\,d_i(H)+1}$, so the same algebra as above yields
\[
t_i(X^{\mathrm{opt}})=s_i^{\,d_i(H)+1}=s_i^{\,\kappa_i(X^{\mathrm{opt}})},
\]
and $i$ cannot profitably deviate downward.

Second, we consider the case where $\rho_i\le \texttt{prev}$, so $x_i^{\mathrm{opt}}=\texttt{prev}$. Let $\tau$ be the iteration when \texttt{prev} first becomes this value.
At time $\tau$, all already-deleted neighbors of $i$ are fixed forever, and all remaining neighbors
are eventually assigned levels $\ge \texttt{prev}$ (either exactly \texttt{prev} if the floor binds for them too, or larger by monotonicity). In the environment at time $\tau$ where $i$'s fixed neighbors are as already set and every other neighbor is treated as weakly above \texttt{prev}, vertex $i$ is willing to collect at least \texttt{prev}, i.e.\ it has no profitable downward deviation from \texttt{prev}. Since later steps only (weakly) increase neighbors' levels relative to that environment and never change the fixed neighbors, they cannot create a new profitable downward deviation for $i$. Hence $t_i(X^{\mathrm{opt}})\le s_i^{\,\kappa_i(X^{\mathrm{opt}})}$ also holds in the final profile $X^{\mathrm{opt}}$.

Since every vertex satisfies both the upward and downward local constraints at the output $X^{\mathrm{opt}}$, $X^{\mathrm{opt}}$ is a pure Nash equilibrium.
\end{proof}

\subsection{Graph Truthfulness and Pareto Optimality}
\subsubsection{Mechanism and reports}
Similar to the continuous case, we interpret the optimal-equilibrium construction as a direct mechanism. We assume that each agent knows their own cost and benefit function and can therefore privately compute their vector of $K$-data-levels (as in Definition~\ref{def:kdatalevel}). This is a sufficient summary of the benefit function for algorithmic purposes. Note that in the graph case, $K$-data-levels need to be reported for $K\in \{1,...,d_i(G)+1\}$ rather than all the way up to $n$ (of course, these coincide in the fully connected graph case).

\subsubsection{Proof Adaptation: Continuous to Graph}
\begin{theorem}[Graph truthfulness under enforcement]
\label{thm:graph-truthful}
Under Model~1, truthful reporting is a Nash equilibrium of the reporting game in the graph case: for all $i$, all $S=(s_i,s_{-i})$, and all misreports $\hat s_i$,
\[
U_i(S) \;\ge\; U_i(\hat s_i, s_{-i}),
\]
where $U_i(\hat s_i, s_{-i})$ denotes $i$'s \emph{true} payoff when the mechanism is run on reports $(\hat s_i,s_{-i})$.
\end{theorem}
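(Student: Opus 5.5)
The plan is to transplant the proof of Theorem~\ref{thm:truthful} to the graph setting, replacing the global rank counts $k_i,k_i^{\uparrow}$ by the neighbor counts $\kappa_i,\kappa_i^{\uparrow}$. Lemmas~\ref{lem:upward-harm-fixed-lower-plusone} and~\ref{lem:downward-harm-robust} are already noted to hold verbatim on graphs, because only neighbors contribute to $t_i$. Before the case analysis I will establish the two structural properties of Algorithm~\ref{alg:graph} that the continuous proof relied on.

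\emph{Prefix-independence.} Two runs on reports $(s_i,s_{-i})$ and $(\hat s_i,s_{-i})$ that delete the same vertices in the same order through iteration $\tau-1$ assign them identical levels. This holds because the residual thresholds $r_j^\bullet$ of vertices $j\neq i$ depend only on $s_j$ and on the levels of already deleted neighbors. The residual graph $H$, the floor \texttt{prev}, and every $\rho_j$ with $j\neq i$ therefore coincide.

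\emph{Iteration-highest-response.} Let $H$ be the residual graph when $i$ is deleted. In any profile where $i$'s already-deleted neighbors sit at their $X^{opt}$ levels and its remaining neighbors are weakly above $x_i^{opt}$, agent $i$ strictly does not want to move up. The computation in Theorem~\ref{thm:graph-alg-ne} gives $t_i\ge s_i^{d_i(H)+1}$, and in such a profile the number of neighbors strictly above plus self is at most $d_i(H)+1$. By the definition of $K$-data-levels (taking the max with strict inequality just above $s_i^K$), a marginal increase is therefore strictly unprofitable.

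Now fix a misreport $\hat s_i$. Let $X=X^{opt}(s)$ and $X'=X^{opt}(\hat s_i,s_{-i})$.

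\textbf{Case 1: $x_i'>x_i$.} Let $\tau_i$ be $i$'s deletion time in the truthful run. Up to $\tau_i-1$, vertices $j\neq i$ are selected by comparing their own $\rho_j$, so the misreported run deletes the same prefix as long as $i$ is not chosen earlier. I will argue that $i$ is not chosen earlier: the floor is monotone, so if $i$ were deleted before $\tau_i$ it would receive a level at most the prefix levels' floor, contradicting $x_i'>x_i$ in the main subcase. A tie-breaking subtlety may need separate handling.

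Hence all vertices deleted before $\tau_i$, and in particular all neighbors of $i$ with level $\le x_i$ in $X$, keep their levels in $X'$. Next, build $\tilde X$ from $X'$ by raising each remaining neighbor of $i$ to at least $x_i'$. Raising only neighbors weakly increases $t_i$ at fixed $x_i'$. This is the graph substitute for positive spillovers, and it is valid because $i$'s payoff depends only on its neighbors' levels, which rules out the indirect harm that breaks supermodularity.

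Apply Lemma~\ref{lem:upward-harm-fixed-lower-plusone} to $(x_i,\tilde X_{-i})$ versus $(x_i',\tilde X_{-i})$, using iteration-highest-response. Finally note that $U_i(x_i,\tilde X_{-i})=U_i(X)$, since $i$'s lower neighbors coincide and its upper neighbors still lie weakly above $x_i$. Together these give $U_i(X')<U_i(X)$.

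\textbf{Case 2: $x_i'\le x_i$.} Every neighbor $j$ with $x_j<x_i'$ was deleted before $i$ in both runs, by monotonicity of levels and the same prefix argument, so its level is unchanged. Since $X$ is a Nash equilibrium by Theorem~\ref{thm:graph-alg-ne}, $i$ has no profitable local downward move. Lemma~\ref{lem:downward-harm-robust} then gives $U_i(X')\le U_i(X)$.

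The main obstacle is the prefix claim. Unlike in the complete graph, a misreport by $i$ changes its own residual selection, and in principle the floor \texttt{prev} could interact with vertices that tie with $i$. I would first handle the generic no-tie situation, then treat ties by showing that tied vertices share the floor value and so cannot affect $i$'s lower neighborhood.
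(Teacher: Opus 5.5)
Your proposal is correct and is essentially the paper's argument: establish prefix-independence and iteration-highest-response for Algorithm~\ref{alg:graph}, use that raising $i$'s neighbors only helps $i$ and that Lemmas~\ref{lem:upward-harm-fixed-lower-plusone} and~\ref{lem:downward-harm-robust} carry over to graphs, and then rerun the two cases of Theorem~\ref{thm:truthful}. The prefix obstacle you flag closes in one line, ties included: if the misreported run deletes $i$ at some iteration $\tau'<\tau_i$, then $\hat\rho_i\le\rho_{(\tau')}$ for the truthful selectee $(\tau')$, whose residual and floor are unchanged. Hence $x_i'=\max\{\textit{prev},\hat\rho_i\}\le x_{(\tau')}\le x_i$, which gives both the ``not earlier'' claim in Case~1 and the fixed-lower-neighbors hypothesis of Lemma~\ref{lem:downward-harm-robust} in Case~2.
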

\begin{theorem}[Graph global Pareto optimality]

\label{thm:graph-global-pareto}
$X^{opt}$ is globally Pareto optimal. No other data-collection vector $X'$ achieves
\[
U_i(X') \ge U_i(X^{opt})\ \text{for all } i
\quad\text{and}\quad
U_j(X') > U_j(X^{opt})\ \text{for some}\ j.
\]
Thus the optimal fair-exchange equilibrium for graphs is Pareto-optimal over all strategy profiles.
\end{theorem}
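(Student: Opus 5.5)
The plan is to transcribe the proof of Theorem~\ref{thm:global-pareto} to the graph setting, using the graph analogues of its two ingredients. The first is that Lemmas~\ref{lem:upward-harm-fixed-lower-plusone} and~\ref{lem:downward-harm-robust} hold with $k_i^{\uparrow}$ and $k_i$ replaced by the neighbor counts $\kappa_i^{\uparrow}$ and $\kappa_i$; the appendix already notes that their proofs go through unchanged. The second is a graph version of \emph{iteration-highest-response}: if vertex $i$ is deleted in iteration $\tau$ of Algorithm~\ref{alg:graph}, consider any profile in which every vertex deleted before $\tau$ sits at its $X^{opt}$ level and every remaining neighbor of $i$ is weakly above $x_i^{opt}$. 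In every such profile, $i$ does not want to deviate upward from $x_i^{opt}$. I would prove this first from the residual-threshold algebra in Theorem~\ref{thm:graph-alg-ne}. In such a profile $t_i=\sum_{u\in N_G(i)\setminus V(H)}x_u^{opt}+(d_i(H)+1)x_i^{opt}\ge s_i^{d_i(H)+1}$, and an upward move gains at most $(d_i(H)+1)$ units per unit collected. The strictness required in Lemma~\ref{lem:upward-harm-fixed-lower-plusone} should come from the definition of $s_i^K$ as a maximum: for $t>s_i^K$ we have $b_i'(t)<c_i/K$. At exact equality one uses concavity on the whole segment, as in the continuous case.

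Next I would run the Pareto argument in data-collection space. Suppose $X'$ Pareto-dominates $X^{opt}$. For Step~1, if $X'\le X^{opt}$ coordinatewise, then positive spillovers in collection space give $U_j(X')\le U_j(x'_j,X^{opt}_{-j})$. Each $\min\{x_i,x_j\}$ is monotone in $x_j$, so this holds immediately even on graphs. Combined with the Nash property of $X^{opt}$ (Theorem~\ref{thm:graph-alg-ne}), this gives $U_j(X')\le U_j(X^{opt})$ for all $j$, a contradiction. Hence some vertex moves strictly up. For Step~2, let $i$ be the first such vertex in deletion order, at iteration $\tau^\star$. Every earlier-deleted vertex $u$ then satisfies $x'_u\le x_u^{opt}$.

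For Step~3, build $\widetilde X$ by setting $\widetilde x_i=x'_i$, resetting every earlier-deleted vertex to $x^{opt}_u$, and keeping $X'$ elsewhere. Then $\widetilde X_{-i}\ge X'_{-i}$, so by collection-space positive spillovers $U_i(X')\le U_i(\widetilde X)$. For Step~4, compare $(x_i^{opt},\widetilde X_{-i})$ with $(x'_i,\widetilde X_{-i})$. Only the neighbors of $i$ matter for $t_i$. The earlier neighbors are fixed at their $X^{opt}$ levels, which are $\le x_i^{opt}$. For a remaining neighbor $j$, if $\widetilde x_j<x_i^{opt}$, I would raise $j$ to $\max\{\widetilde x_j,x'_i\}$. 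This only helps $i$ (spillovers), and it puts the profile in the setting of the iteration-highest-response property. Lemma~\ref{lem:upward-harm-fixed-lower-plusone} then gives $U_i(x'_i,\cdot)<U_i(x_i^{opt},\cdot)$. Finally, at $x_i^{opt}$ the total accessible data of $i$ equals $t_i(X^{opt})$, because earlier neighbors match and later neighbors are all weakly above $x_i^{opt}$. So $U_i(X')<U_i(X^{opt})$, which is the desired contradiction.

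The main obstacle I expect is the interaction between Step~4 and remaining neighbors that sit in $X'$ below $x_i^{opt}$ but were not deleted before $i$. Raising them to restore the ``weakly above'' hypothesis must be justified by spillovers, and one must also check that Lemma~\ref{lem:upward-harm-fixed-lower-plusone}'s requirement (agents weakly below $x_i$ are unchanged) still holds between the two compared profiles. I would handle this by comparing $(x_i^{opt},Y_{-i})$ and $(x'_i,Y_{-i})$ with the same opponent profile $Y_{-i}$ in both, so that requirement is trivial.

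The other delicate point is ties. If a remaining neighbor ties with $i$ at the floor \texttt{prev}, the count of neighbors strictly above $i$ must still be bounded by $d_i(H)+1$. This follows from the residual construction, which is the graph analogue of the bound $k^{\uparrow}\le K$ used in the proof of Theorem~\ref{thm:max-alg-ne}.
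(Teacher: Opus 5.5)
Your proposal is correct and follows essentially the same route as the paper. The paper proves this theorem by transporting the four-step argument of Theorem~\ref{thm:global-pareto} to graphs, using positive spillovers in collection space, Lemma~\ref{lem:upward-harm-fixed-lower-plusone}, and the iteration-highest-response property of the graph algorithm. Your extra steps fill in details the paper leaves implicit rather than changing the approach: you raise remaining neighbors that sit below $x_i^{opt}$ so the highest-response hypothesis genuinely applies, and you obtain strictness from $b_i'(t)<c_i/K$ for all $t>s_i^K$, integrated over the whole segment, rather than from the one-sided derivative at $t_i$.
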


Rather than reproducing full proofs, we isolate the key properties needed to adapt the continuous proofs of Theorems~\ref{thm:truthful} and~\ref{thm:global-pareto} to the graph setting. Once these properties are established, the remainder of each argument proceeds exactly as in the continuous case.

In the graph case, positive spillovers in data collection space are still immediate. In addition, Lemmas \ref{lem:upward-harm-fixed-lower-plusone} and \ref{lem:downward-harm-robust} work exactly as stated in the main body of the paper (the proofs go through unchanged). 

Algorithmically, we need two facts for the truthfulness and global Pareto optimality proofs to go through (the definitions are given in Section~\ref{sec:truthfulness}. The first is that Algorithm~\ref{alg:graph} satisfies prefix-independence, which follows from the bottom-up construction of the algorithm. The specific $K$-data-levels of an agent are not used until that agent is selected by the algorithm. 

The second fact is that Algorithm~\ref{alg:graph} has the iteration-highest-response property. Fix an iteration $\tau$ and let $i$ be the agent selected at that iteration. Consider any profile in which (a) all agents selected before $\tau$ are fixed to the levels the algorithm assigned them, and (b) all agents not yet selected are weakly above $i$. The graph algorithm assigns $x_i^{\max}=\max\{\texttt{prev},\rho_i\}$ so in particular $x_i^{\max}\ge \rho_i$. By construction, $\rho_i$ is the smallest level at which $i$'s local upward incentive becomes negative in the environment where all earlier agents contribute what was already fixed and all later agents contribute weakly more. This is because it is the point where $i$ reaches its relevant $K$-data-level threshold after subtracting the already-fixed neighbors' contribution via the residual update. Therefore, since $x_i^{\max}$ is weakly above this threshold, agent $i$ strictly does not want to deviate upward from
$x_i^{\max}$ under any profile consistent with fixing the earlier agents and placing the remaining agents weakly above. So the iteration-highest-response property holds.

Now that the above set of properties is established, the proofs of Theorems~\ref{thm:graph-truthful} and~\ref{thm:graph-global-pareto} follow from the proofs of Theorems~\ref{thm:truthful} and~\ref{thm:global-pareto}.

\section{Discrete Case}
\label{app:discrete}
The discrete setting is identical to the continuous model in utilities and in the fair-exchange rule, except that each agent's data collection decision must be integral: each agent $i$ chooses $x_i \in \mathbb{Z}_{\ge 0}$.

Unlike the continuous case, not every total-access profile corresponds to an integer collection profile (though we can compute a total-access profile from any data-collection profile). Accordingly, throughout the discrete appendix we work directly in collection space $\mathbb{Z}_{\ge 0}^n$.

\subsection{Discrete Supermodularity Counterexamples}
\label{subsec:discrete-super-counter}
In the discrete model, the requirement for integer strategies breaks both supermodularity in total data space and implies that there is no Pareto-best equilibrium (equilibria may be incomparable in terms of agent utilities). We give two examples that illustrate these phenomena.

\begin{example}[Non-monotone BR in the discrete game (total data space)]
\label{ex:discrete-nonmonotone-br}
3 agents, $p_1, p_2, p_3$

Consider $p_1$ with data costs 1 and data benefit function (with $\epsilon$ small):
\[
b(T) =
\begin{cases}
(1+\epsilon)T, & T \leq 11, \\[6pt]
11(1+\epsilon), & T > 11.
\end{cases}
\]
We can interpret this as agent 1 will collect data until they have at least 11 total accessible data, then stop. 

In the first case, agents 2 and 3 collect $(0, 6)$. Then, the best response for $p_1$ is to collect $6$ and get total data $12$ (since integer data collection prevents them from getting total data $11$). 

In the second case, agents 2 and 3 collect $(1, 7)$. Then, the best response for $p_1$ is to collect $5$ and get total data $11$. So even though the opponents strategy profiles strictly increased, the best response went down (both in data collection and total data accessible space).
\end{example}

\begin{example}[No Pareto-best equilibrium]
\label{ex:discrete-incomparable-eq}
This example shows that the utilities of different equilibria may not be componentwise comparable. 

6 agents, $p_1, p_2, p_3, p_4, p_5, p_6$. 

All data costs for all agents are 1.

For agents $p_1$ and $p_2$, the benefit function is
\[
b(T) =
\begin{cases}
\left(\tfrac{1}{6}\right) T, & T \leq 6, \\[6pt]
1, & T > 6.
\end{cases}
\]

For agents $p_3, p_4$ and $p_5$, the benefit function is
\[
b(T) =
\begin{cases}
(1+\epsilon) T, & T \leq 22, \\[6pt]
(1+\epsilon)\cdot 22, & T > 22.
\end{cases}
\]

For agent $p_6$, the benefit function is
\[
b(T) =
\begin{cases}
(1+\epsilon) T, & T \leq 117, \\[6pt]
(1+\epsilon)\cdot 117 + \epsilon(T - 117), & T > 117.
\end{cases}
\]
One possible equilibrium is $X_1 = (0, 0, 6, 6, 6, 100)$ and another possible equilibrium is $X_2 = (1, 1, 5, 5, 5, 100)$. In the first equilibrium, $p_6$ is better off and in the second equilibrium $p_3, p_4, p_5$ are better off.

There is no equilibrium that dominates both of these. We know that $p_1$ and $p_2$ can collect $0$ or $1$ point at equilibrium. If they don't both collect 1, then agents $p_3, p_4, p_5$ will all collect $6$ points to hit their threshold of 22 total data each and be worse off than they are in the $X_2$. However if $p_1$ and $p_2$ do both collect 1, then the remaining players are constrained to $X_1$ and $p_6$ will be worse off. So there is no Pareto-best equilibrium in this case.
\end{example}

\subsection{Discrete Maximal Equilibrium Algorithm}
We are no longer able to take as input merely the $K$-data-levels since we need to assess best responses to specific data-collection profiles in the loop of the algorithm. 

\begin{algorithm}[H]
  \SetAlgoNoLine
  \KwIn{$c_j$ and $b_j(\cdot)$ for all $j\in[n]$}
  \KwOut{Integer contribution profile $X=(x_1,\dots,x_n)\in \mathbb{Z}^n_{\ge 0}$}

  Compute $K$-data levels $\{s_i^{\,K}\}_{K=1}^{n}$ from $c_i, b_i(\cdot)$\;
  $R \gets \{1,\dots,n\}$\;
  $\textit{prev} \gets 0$\;
  leave all $x_j$ unset\;

  \While{$R \neq \emptyset$}{
    $i \gets n-|R|$\tcp*[r]{\# agents already assigned}

    \tcp{Compute each remaining agent's ``next-stage target'' and feasible floor.}
    \ForEach{$j \in R$}{
      $\tilde x_j \gets \dfrac{s_j^{\,n-1-i} - \sum_{k\notin R} x_k}{|R|}$\;
      $m_j \gets \max\{\lfloor \tilde x_j\rfloor,\ \textit{prev}\}$\;
    }

    $m \gets \min_{j\in R} m_j$\;
    $S \gets \{j\in R : m_j = m\}$\tcp*[r]{tie group with smallest floor}

    \tcp{Try placing the entire tie group at level $m+1$.}
    \tcp{Check for a profitable \emph{downward} deviation ($m{+}1 \to m$) under a pessimistic completion where all agents in $R\setminus S$ contribute at least $m+1$.}
    \eIf{no agent in $S$ deviates downward if assigned $m+1$ under this completion}{
      \ForEach{$j \in S$}{
        $x_j \gets m+1$\;
      }
      $\textit{prev} \gets m+1$\;
      $R \gets R \setminus S$\;
    }{
      choose any $j^* \in S$ that \emph{would} deviate downward if assigned $m+1$ under this completion\;
      $x_{j^*} \gets m$\;
      $\textit{prev} \gets m$\;
      $R \gets R \setminus \{j^*\}$\;
    }
  }

  \Return{$X$}\;

  \caption{Discrete Maximal Data Equilibrium}
  \label{alg:discrete}
\end{algorithm}

\subsection{Discrete Algorithm Correctness}
\begin{theorem}[Correctness of Algorithm~\ref{alg:graph}]
\label{thm:discrete-alg-ne}
The profile $X^{opt}$ returned by Algorithm~\ref{alg:discrete} is a Nash equilibrium.
\end{theorem}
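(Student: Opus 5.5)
Following the continuous case, I would reduce Nash equilibrium to the two local conditions and then verify those conditions by induction on the order in which Algorithm~\ref{alg:discrete} fixes agents. In the discrete game a unilateral deviation of agent $i$ changes $x_i$ by an integer amount. Lemmas~\ref{lem:upward-harm-fixed-lower-plusone} and~\ref{lem:downward-harm-robust} only use concavity of $b_i$ and the counting of agents weakly or strictly above $i$, so their proofs carry over to integer jumps when the local conditions are stated as one-unit deviations. The upward condition becomes $U_i(x_i+1,X_{-i})\le U_i(X)$ and the downward condition becomes $U_i(x_i-1,X_{-i})\le U_i(X)$. Concavity of $b_i$ together with the piecewise-linear, concave-in-$x_i$ structure of $t_i$ then makes $U_i(\cdot,X_{-i})$ single-peaked on $\mathbb{Z}_{\ge 0}$. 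So it suffices to check these two one-step inequalities for every agent in the output profile $X$.

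Assignments are nondecreasing in the order agents are fixed, since \textit{prev} is a floor. Hence for an agent $j$ fixed at a given step, the agents fixed earlier sit weakly below $x_j$ and the agents fixed later sit weakly above $x_j$, exactly as in the proof of Theorem~\ref{thm:max-alg-ne}. Let $P=\sum_{k\notin R}x_k$ at that step. For $j$ with $x_j=\ell$, the total is
\[
t_j = P' + \kappa\,\ell,
\]
where $P'$ sums the earlier agents strictly below $\ell$ and $\kappa$ counts the agents at level $\ge \ell$, including $j$. I would treat the two branches of the algorithm separately.

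\textbf{Success branch.} The whole group $S$ is placed at $m+1$. The test directly certifies that no member of $S$ gains from moving down to $m$ under the pessimistic completion, in which every later agent is at least $m+1$. Later agents only weakly raise the data available to $j$ at every level, and they never reach strictly below $m+1$. By the argument of Lemma~\ref{lem:downward-harm-robust}, this certificate survives into the final profile. For the upward condition, $m+1>\tilde x_j$ because $m\ge\lfloor\tilde x_j\rfloor$. Then $t_j$ at $m+1$ with the full remaining count exceeds the relevant data level, so one more unit is unprofitable. The definition of $\tilde x_j$ via $s_j^{K}$ with $K=|R|$ gives this. Ties from later agents landing at $m+1$ only reduce $k^{\uparrow}$, which helps.

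\textbf{Failure branch.} A single $j^\ast$ is placed at $m$, and the check shows that $j^\ast$ strictly prefers $m$ to $m+1$ under the completion. Since later agents are $\ge m$, the upward one-step condition at $m$ follows, again using Lemma~\ref{lem:upward-harm-fixed-lower-plusone} to transfer to the final profile. The downward condition at $m$ needs one of two facts. Either $m=\lfloor\tilde x_{j^\ast}\rfloor\le\tilde x_{j^\ast}$, so collecting up to $\tilde x_{j^\ast}$ is still weakly profitable. Or $m=\textit{prev}$, in which case I argue as in the ``prev binds'' case of Theorem~\ref{thm:max-alg-ne}, using the agent who set \textit{prev} and the minimality of $m_j$ over $R$.

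The main obstacle I expect is the downward condition when the floor \textit{prev} binds after a failure branch, and the interaction of ties in the integer setting. Agents fixed later at the same level change $k_j$ but not $k_j^{\uparrow}$, and it is unclear that the comparison via $s^{K+1}$ used in the continuous proof survives rounding. I would handle this by showing that the invariant ``every fixed agent satisfies its one-step downward inequality against any completion weakly above \textit{prev}'' is preserved at each iteration, proving it directly from the group test rather than via data levels.
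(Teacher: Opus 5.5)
Your skeleton matches the paper's proof. Concavity of $U_i(\cdot,X_{-i})$ reduces everything to $\pm1$ checks. In the success branch the group test gives the downward check at $m+1$, and $m+1>\tilde x_j$ gives the upward one. In the failure branch the test gives the upward check at $m$, and data levels give the downward one. Reading the index in $\tilde x_j$ as $|R|$, as you do, is what makes the success-branch upward step hold under Definition~\ref{def:kdatalevel}. The algorithm's literal index $s_j^{\,n-1-i}=s_j^{\,|R|-1}$ would not give it when $|S|=1$.

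The one step to fix is how you close the case where the floor binds in a failure step. The invariant you propose to prove ``directly from the group test rather than via data levels'' cannot work. A failure-step test compares only $m+1$ against $m$, and only for members of the current $S$. So no test ever certifies that an agent $j^\ast$ placed at $m$ in a failure step prefers $m$ to $m-1$.

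That inequality has to come from $j^\ast$'s own data level. The route you first named, the agent who set \textit{prev} plus minimality of $m_j$, does this with no rounding loss. If $m=0$ there is nothing to check. Otherwise, let $\sigma$ be the iteration at which \textit{prev} first rose to $m$, and mark quantities computed there with $\sigma$. There are two cases: $\sigma$ was a failure step with $m_\sigma=m$, or a success step with $m_\sigma=m-1$. In both, $j^\ast\in R_\sigma$ was not fixed at $\sigma$, so $m^\sigma_{j^\ast}\ge m>\textit{prev}_\sigma$. This forces $\lfloor\tilde x^\sigma_{j^\ast}\rfloor\ge m$, i.e.\ $P_\sigma+|R_\sigma|\,m\le s_{j^\ast}^{\,|R_\sigma|}$. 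Because $m$ is an integer, the floor costs nothing.

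Now look at the final profile. Agents fixed before $\sigma$ sit at $\le m-1$. Every agent fixed from $\sigma$ through $j^\ast$'s own step sits at exactly $m$, and later agents sit at $\ge m$. Hence $t_{j^\ast}(m)=P_\sigma+|R_\sigma|\,m$ and $t_{j^\ast}(m)-t_{j^\ast}(m-1)=|R_\sigma|$. Concavity then makes the loss from moving down at least $|R_\sigma|\,b'_{j^\ast}(t_{j^\ast}(m))\ge c_{j^\ast}$.

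This is what the paper's pointer to ``the iteration when \texttt{prev} first becomes this value'' (proof of Theorem~\ref{thm:graph-alg-ne}) amounts to. It uses minimality at iteration $\sigma$ with rank $|R_\sigma|$, not a comparison with the immediately preceding agent at rank $K+1$ as in Theorem~\ref{thm:max-alg-ne}. That comparison is valid only if the preceding agent was not itself held up by the floor. This is why your worry about $s^{K+1}$ surviving rounding does not arise.
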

\begin{proof}
We first note that it suffices to check $\pm 1$ deviations, similar to how it suffices to check local deviations in the continuous case. For fixed $X_{-i}$, the discrete marginal gain $U_i(x_i{+}1;X_{-i})-U_i(x_i;X_{-i})$ is nonincreasing in $x_i$ due to the concavity of the benefit function and the fair exchange contract giving less data per unit collected at higher levels (since fewer agents will be above). so
$U_i(\cdot;X_{-i})$ is single-peaked on $\mathbb{Z}_{\ge0}$ and $\pm 1$ deviations suffice. 

Fix an iteration with remaining agents $R$ (so $K:=|R|$) and tie group $S\subseteq R$ (so $r:=|S|$) with candidate integer data-collection level $m$. 

Suppose we assign every agent in $S$ the common level $m{+}1$, keeping previously fixed agents unchanged. The upward deviation constraint is automatically satisfied, since these agents are weakly above their relevant $K$-data-level. The algorithm manually checks each agent for a profitable downwards deviation and only assigns the group to this level if there are none. 

Now suppose we assign some agent $i\in S$ the level $m$. The downward deviation constraint is automatically satisfied, since this agent is weakly below their relevant $K$-data-level unless \textit{prev} binds. If \textit{prev} binds, we use the state of the algorithm at that time to show no downward deviation, exactly as in the proofs of Theorem ~\ref{thm:graph-alg-ne}. We also know if all remaining agents were placed at $m+1$, $i$ would not deviate upwards, since we chose them for deviating downwards in the same environment. So in the actual profile, which is weakly less favorable since the remaining agents are only weakly above $m$, $i$ does not want to deviate upwards either. 
\end{proof}

\subsection{Discrete Truthfulness and Pareto Optimality}
\label{app:truthfulness-discrete}

\paragraph{Mechanism and reports.}
Each agent $i$ has type $v_i = (b_i,c_i)$ where $b_i:\mathbb{R}_{\ge 0}\to\mathbb{R}$ is increasing and concave and $c_i>0$ is a per-unit cost. Unlike the continuous case, it is easiest to consider the agents giving their complete cost and benefit functions. Given reports, the mechanism runs the discrete maximal-equilibrium construction (with fixed tie-breaking) and outputs an integer collection profile $x\in\mathbb{Z}_{\ge 0}^n$, which is then enforced prior to exchange.

\subsubsection{Proof Adaptation: Continuous to Discrete}
\begin{theorem}[Discrete truthfulness under enforcement]
\label{thm:discrete-truthful}
Under Model~1, truthful reporting is a Nash equilibrium of the discrete reporting game: for all $i$, all $V=(v_i,v_{-i})$, and all misreports $\hat v_i$,
\[
U_i(V) \;\ge\; U_i(\hat v_i,v_{-i}),
\]
where $U_i(\hat v_i,v_{-i})$ denotes $i$'s \emph{true} payoff when the mechanism is run on reports $(\hat v_i,v_{-i})$.
\end{theorem}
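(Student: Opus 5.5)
The plan mirrors the graph-case adaptation: isolate the structural ingredients that the proof of Theorem~\ref{thm:truthful} actually uses, verify each one for Algorithm~\ref{alg:discrete} in integer collection space, and then rerun the two-case argument. Those ingredients are (i) positive spillovers in collection space, (ii) discrete analogues of Lemma~\ref{lem:upward-harm-fixed-lower-plusone} and Lemma~\ref{lem:downward-harm-robust}, (iii) prefix-independence, and (iv) the iteration-highest-response property.

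Ingredient (i) is immediate, since $\min\{x_i,x_j\}$ is nondecreasing in $x_j$. For (ii), the two lemmas will be restated with integer moves. In the upward lemma, the local hypothesis becomes the condition that $i$ strictly loses by moving from $x_i$ to $x_i+1$ when lower agents are fixed. Under fair exchange, each additional unit yields at most $k_i^{\uparrow}$ units of data, and by concavity the marginal value of each later unit is no larger. So the single-peakedness observation from the proof of Theorem~\ref{thm:discrete-alg-ne} telescopes to show that any jump $x_i\to x_i'$ is unprofitable. The downward lemma carries over verbatim, by restricting the monotone path in its proof to integer points.

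For (iii), I will argue that Algorithm~\ref{alg:discrete} reads agent $i$'s report only through $\tilde x_i$ and through the downward-deviation check on groups containing $i$. Consequently, as long as $i$ is not assigned, the choices of $m$, $S$ and the assignments of other agents can depend on $\hat v_i$ only via whether $i$ is in the minimizing tie group. There is a subtlety here: $i$'s report can change which group $S$ is formed, and hence whether the whole group is placed at $m+1$ or a single deviator is placed at $m$. I will handle this by comparing runs up to the first iteration at which $i$ is assigned in either run. Before that iteration, agents assigned in the truthful run are assigned identically in the misreport run. The exception is agents tied with $i$ inside the same group; for these I must show that $i$'s later or earlier placement does not move them below $i$'s level in a way that hurts $i$.

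For (iv), at the iteration where $i$ is fixed, either (a) its group is placed at $m+1\ge\lfloor\tilde x_i\rfloor+1>\tilde x_i$, so that $i$'s total exceeds its relevant $K$-data level whenever the remaining agents are weakly above, or (b) it is placed at $m$ as a downward-deviator under the completion in which all remaining agents are at least $m+1$. In case (b), moving from $m+1$ down to $m$ helps $i$ in that completion, so moving up from $m$ to $m+1$ hurts $i$ there. This is the needed strict upward unwillingness.

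With (i)–(iv) in hand, the case split of Theorem~\ref{thm:truthful} applies as written. If the misreport places $i$ weakly higher, I raise the later-truthfully-selected agents to be weakly above $x_i'$ (using positive spillovers), then apply the discrete upward lemma from $x_i$ using (iv). If the misreport places $i$ lower, prefix-independence fixes every agent strictly below $x_i'$, and the discrete downward lemma applies because the truthful output is a Nash equilibrium by Theorem~\ref{thm:discrete-alg-ne}.

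I expect the main obstacle to be prefix-independence under group moves. A misreport by $i$ can alter the tie group, or flip the check from ``place $S$ at $m+1$'' to ``place a deviator at $m$.'' That flip raises the floor \textit{prev} differently for the others, and strict equality of all lower agents' collections may fail. If this happens, my fallback is to show directly that any such alteration only places other agents weakly higher relative to $i$. Positive spillovers would then still give the inequalities that the two lemmas need.
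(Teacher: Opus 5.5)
Your proposal follows the paper's proof essentially step for step. The paper likewise reduces the discrete statement to positive spillovers, integer versions of Lemmas~\ref{lem:upward-harm-fixed-lower-plusone} and~\ref{lem:downward-harm-robust}, prefix-independence, and iteration-highest-response (verified exactly as in your (iv)), and then reruns the two cases of Theorem~\ref{thm:truthful}. The paper passes over the obstacle you flag by simply asserting prefix-independence, but your own first-divergence comparison already closes it, so no fallback is needed: if the misreport fixes $i$ at a strictly earlier iteration, one checks that $x_i'\le x_i$. Hence in Case~1 the runs agree on every agent fixed before $i$'s truthful iteration, and $i$'s truthful group-mates are simply among the remaining agents you raise in $\widetilde X$; in Case~2, every agent strictly below $x_i'$ in the truthful profile was fixed before the first divergence.
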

\begin{theorem}[Discrete global Pareto optimality]

\label{thm:discrete-global-pareto}
$X^{opt}$ is globally Pareto optimal. No other integer data-collection vector $X'$ achieves
\[
U_i(X') \ge U_i(X^{opt})\ \text{for all } i
\quad\text{and}\quad
U_j(X') > U_j(X^{opt})\ \text{for some}\ j.
\]
Thus the optimal fair-exchange equilibrium for the discrete case is Pareto-optimal over all strategy profiles.
\end{theorem}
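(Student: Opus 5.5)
The plan is to follow the proof of Theorem~\ref{thm:global-pareto} step by step, working in integer collection space, and replacing the continuous ingredients with discrete analogues. The four things I need are: positive spillovers in collection space, a discrete version of Lemma~\ref{lem:upward-harm-fixed-lower-plusone}, prefix structure of Algorithm~\ref{alg:discrete}, and a discrete iteration-highest-response property.

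\textbf{Ingredients.}
\begin{itemize}
\item \emph{Positive spillovers.} These are immediate: raising any $x_j$ weakly raises each $\min\{x_i,x_j\}$.
\item \emph{Discrete upward lemma.} If all agents $j\ne i$ with $x_j\le x_i$ are held fixed and $i$ has a strictly unprofitable $+1$ deviation, then every jump $x_i'>x_i$ is strictly unprofitable. By the single-peakedness argument in the proof of Theorem~\ref{thm:discrete-alg-ne}, the discrete marginal gain $U_i(y{+}1;X_{-i})-U_i(y;X_{-i})$ is nonincreasing in $y$. The extra data per unit step is at most $k_i^{\uparrow}$, since only agents above can add. So the telescoped sum of increments is $<0$, exactly as in the continuous lemma.
\item \emph{Prefix structure.} Algorithm~\ref{alg:discrete} fixes agents (or tie groups) in nondecreasing order of assigned level. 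A group's assignment depends only on previously fixed agents and on the $K$-data levels of the current remaining set.
\item \emph{Iteration-highest-response.} I will state it as follows. Let $i$ be fixed at some iteration, at level $m$ or $m{+}1$. Take any profile in which earlier-fixed agents keep their levels and all other agents are weakly above $x_i^{opt}$. Then $i$ strictly prefers not to move up by one unit.
\end{itemize}

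\textbf{Main argument.} Suppose $X'$ Pareto-dominates $X^{opt}$.
\begin{enumerate}
\item \emph{Step 1.} If $X'\le X^{opt}$ coordinatewise, then for each $j$,
\[
U_j(X')\le U_j(x'_j,X^{opt}_{-j})\le U_j(X^{opt}),
\]
using positive spillovers and then Theorem~\ref{thm:discrete-alg-ne}. This contradicts strict domination.
\item \emph{Step 2.} So some agent moves up. Let $i$ be the first agent, in the algorithm's fixing order, with $x'_i>x^{opt}_i$. Order agents within a tie group arbitrarily.
\item \emph{Step 3.} Build $\widetilde X$ by resetting every $j\ne i$ with $x^{opt}_j<x^{opt}_i$ to $x^{opt}_j$, and leaving all others at $X'$. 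These reset agents were fixed earlier and did not move up, so $\widetilde X_{-i}\ge X'_{-i}$ and positive spillovers give $U_i(X')\le U_i(\widetilde X)$.
\item \emph{Step 4.} Agents below $i$ in $X^{opt}$ are common to both profiles, so
\[
U_i(x^{opt}_i,\widetilde X_{-i})=U_i(X^{opt}).
\]
Applying the discrete upward lemma with iteration-highest-response gives
\[
U_i(\widetilde X)<U_i(x^{opt}_i,\widetilde X_{-i})=U_i(X^{opt}),
\]
which contradicts domination.
\end{enumerate}

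\textbf{Main obstacle.} The hard step is iteration-highest-response in the discrete algorithm, for two reasons.
\begin{itemize}
\item \emph{Tie groups placed at $m{+}1$.} In $\widetilde X$, other members of $S$ may sit weakly above rather than exactly at $m{+}1$. I would handle this by the definition of $\tilde x_j$ and $m_j\ge\lfloor\tilde x_j\rfloor$, which gives $m{+}1>\tilde x_j$. Then, with all of $R$ weakly above, $i$'s total after one more unit exceeds $s_i^{|R|}$, and strict concavity at the margin makes the $+1$ step unprofitable. I would also need to check that the index convention $s^{\,n-1-i}$ in the algorithm matches $|R|$.
\item \emph{Agents placed at $m$ as downward deviators.} Here I would use the algorithm's own observation: $i$ strictly prefers $m$ over $m{+}1$ when the rest of $R$ is at least $m{+}1$. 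Agents already fixed in earlier iterations are identical in $\widetilde X$, so the upward step is unprofitable in that environment. Moving the remaining agents higher does not increase $i$'s gain from its own unit step beyond $k_i^{\uparrow}$ units, so this comparison transfers.
\item \emph{Weak versus strict.} If the downward check is only weak (indifference), I would break it toward the lower level to preserve the strictness that Step 4 requires.
\end{itemize}
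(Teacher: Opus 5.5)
Your route is the paper's: it transplants Steps 1--4 of Theorem~\ref{thm:global-pareto} to integer collection space. The ingredients are the same: positive spillovers, single-peakedness in $x_i$, prefix-independence, and iteration-highest-response of Algorithm~\ref{alg:discrete}. But Step~4 does not go through for the $\widetilde X$ you build in Step~3. The continuous Step~3 you are copying has the same defect.

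You reset only the agents with $x^{opt}_j<x^{opt}_i$ and leave everyone else at $X'$. That ``everyone else'' includes two kinds of agent. Tie-mates of $i$ fixed before it did not move up, but they may have moved down. Agents fixed after $i$ may sit anywhere in $X'$, including strictly below $x^{opt}_i$. So $\widetilde X$ is not a profile in which all remaining agents are weakly above $x^{opt}_i$, and your iteration-highest-response does not apply to it. The equality $U_i(x^{opt}_i,\widetilde X_{-i})=U_i(X^{opt})$ is also false, because $i$ receives only $x'_y<x^{opt}_i$ from such a $y$.

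Even the inequality $U_i(\widetilde X)<U_i(x^{opt}_i,\widetilde X_{-i})$ can fail. Take two agents with agent~$1$ fixed first, $X^{opt}=(5,10)$ (for a suitable $b_2,c_2$), $b_1(t)=10\min\{t,10\}$, $c_1=1$, and $X'=(6,0)$. Then $\widetilde X=X'$ and $U_1(6,0)=54>45=U_1(5,0)$, although $U_1(X^{opt})=95$.

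The repair is the raising step the paper itself uses in Case~1 of Theorem~\ref{thm:truthful}:
\begin{itemize}
\item reset \emph{every} agent fixed before $i$, tie-mates included, to $x^{opt}_j$;
\item raise every agent fixed after $i$ to $\max\{x'_y,x'_i\}$.
\end{itemize}
Then $\widetilde X_{-i}\ge X'_{-i}$ still holds. Every other agent is now either at its $X^{opt}$ level (if fixed before $i$) or strictly above $x^{opt}_i$, so $t_i(x^{opt}_i,\widetilde X_{-i})=t_i(X^{opt})$. Iteration-highest-response now applies, and single-peakedness for fixed $\widetilde X_{-i}$ extends the strict $+1$ loss to the jump to $x'_i$. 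This gives $U_i(X')\le U_i(\widetilde X)<U_i(X^{opt})$.

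Two smaller points on your obstacle list. First, your ``weak versus strict'' fix points the wrong way. Suppose an agent indifferent between $m$ and $m{+}1$ (with the rest of $R$ at $\ge m{+}1$) is fixed at $m$. Then its $+1$ step has zero gain, so iteration-highest-response loses exactly the strictness Step~4 needs, and that agent moving up can strictly help the agents above it. Strictness is kept by counting only a \emph{strictly} profitable downward move as a deviation, which is how the algorithm's test reads; an indifferent agent is then never the one fixed at $m$.

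Second, for a group placed at $m{+}1$, strictness comes from $F+|R|(m{+}1)>F+|R|\tilde x_j=s_j$ together with $s_j$ being a maximum in Definition~\ref{def:kdatalevel}. It does not come from strict concavity, which $b_j$ need not have. Your index worry resolves if $s^K$ is read as the level for $K$ \emph{additional} units, i.e.\ threshold $c_j/(K{+}1)$. That is the reading used in the prose of Definition~\ref{def:kdatalevel} and in the reports $\hat s_i^0,\dots,\hat s_i^{n-1}$. Under it, $s^{\,n-1-i}=s^{\,|R|-1}$ matches a unit step worth exactly $|R|$ units.
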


Similar to the graph case, rather than reproducing full proofs, we isolate the key properties needed to adapt the continuous proofs of Theorems~\ref{thm:truthful} and~\ref{thm:global-pareto} to the discrete setting. Once these properties are established, the remainder of each argument proceeds exactly as in the continuous case.

In the discrete case, positive spillovers in data collection space are still immediate. In addition, Lemmas \ref{lem:upward-harm-fixed-lower-plusone} and \ref{lem:downward-harm-robust} are true for deviations in the new discrete strategy space. 

Algorithmically, we need two facts for the truthfulness and global Pareto optimality proofs to go through (the definitions are given in Section~\ref{sec:truthfulness}. The first is that Algorithm~\ref{alg:discrete} satisfies prefix-independence, which follows from the bottom-up construction of the algorithm. The specific $K$-data-levels of an agent are not used until that agent is selected by the algorithm. 

The second fact is that Algorithm~\ref{alg:discrete} has the iteration-highest-response property. When the algorithm assigns $x_i^{\max}=m$ to an agent $i\in S$, we have verified that $i$ does not want to increase its level by one unit in the environment where all remaining agents are weakly above them and previous agents are fixed to their levels, which is exactly the iteration-highest-response property. When the algorithm assigns a group of agents to $m+1$, they are being assigned above their $K$-data-level and thus also will not want to deviate upwards regardless of the positions of agents selected after them. Therefore the discrete algorithm satisfies iteration-highest-response.

Now that the above set of properties is established, the proofs of Theorems~\ref{thm:discrete-truthful} and~\ref{thm:discrete-global-pareto} follow from the proofs of Theorems~\ref{thm:truthful} and~\ref{thm:global-pareto}.